\newcommand{\lipics}{}
\let\c@author\relax
\let\oldcitet\citet
\renewcommand{\citet}[1]{\mbox{\oldcitet{#1}}}
\setlist[enumerate]{nosep} %
\setlist[itemize]{nosep} %
\definecolor{darkblue}{rgb}{0,0,0.45}
\definecolor{darkred}{rgb}{0.6,0,0}
\definecolor{darkgreen}{rgb}{0.13,0.5,0}
\crefname{observation}{Observation}{Observations}
\newcommand{\mc}[1]{{\mathcal{#1}}}
\newcommand{\eps}{{\varepsilon}}
\newcommand{\hy}{\hbox{-}\nobreak\hskip0pt}
\newif\iflabel
\newif\ifdbs
\newif\ifamp
\IfSubStr\expandafter{\BODY}{\label}{\labeltrue}{\labelfalse}%
\IfSubStr\expandafter{\BODY}{\\}{\dbstrue}{\dbsfalse}%
\IfSubStr\expandafter{\BODY}{&}{\amptrue}{\ampfalse}%
\def\doitallstar{}\else\def\doitallstar{*}\fi
      \def\doitallname{align}%
      \def\doitallname{multline}%
    \def\doitallname{equation}
\edef\x{\endgroup
    \noexpand\begin{\doitallname\doitallstar}%
    \noexpand\BODY
    \noexpand\end{\doitallname\doitallstar}%
  }\x
\def\[#1\]{\begin{doitall}#1\end{doitall}}
\newcommand{\pname}[1]{\textsc{#1}}
\newcommand{\shs}{\pname{Sparse-HS}\xspace}
\newcommand{\svc}{\pname{Sparse-VC}\xspace}
\newcommand{\fvc}{\pname{Fair-VC}\xspace}
\newcommand{\ofvc}{\pname{Open-Fair-VC}\xspace}
\newcommand{\spc}{\pname{$r$-SPC}\xspace}
\newcommand{\hd}{\pname{$r$-HD}\xspace}
\newcommand{\remove}[1]{}
\newcommand{\executeiffilenewer}[3]{%
\ifnum\pdfstrcmp{\pdffilemoddate{#1}}%
{\pdffilemoddate{#2}}>0%
{\immediate\write18{#3}}\fi%
}
\newcommand{%
\executeiffilenewer{.svg}{.pdf}%
{inkscape -z -D --file=.svg %
--export-pdf=.pdf --export-latex}%
\everymath{\color{black}}%
\textcolor{black}{\input{.pdf_tex}}\everymath{\color{darkred}}%
}[1]{%
\executeiffilenewer{#1.svg}{#1.pdf}%
{inkscape -z -D --file=#1.svg %
--export-pdf=#1.pdf --export-latex}%
\everymath{\color{black}}%
\textcolor{black}{\input{#1.pdf_tex}}\everymath{\color{darkred}}%
}
\newcommand{%
\executeiffilenewer{.svg}{_.pdf}%
{inkscape -z -D --file=.svg --export-id= -j %
--export-pdf=_.pdf --export-latex}%
\input{_.pdf_tex}%
}[2]{%
\executeiffilenewer{#1.svg}{#1_#2.pdf}%
{inkscape -z -D --file=#1.svg --export-id=#2 -j %
--export-pdf=#1_#2.pdf --export-latex}%
\input{#1_#2.pdf_tex}%
}
\newcommand{\polyn}{\cdot n^{O(1)}}
\DeclareMathOperator{\dist}{dist}
\title{On Sparse Hitting Sets:\newline 
from Fair Vertex Cover to Highway Dimension}
\titlerunning{On Sparse Hitting Sets:
from Fair Vertex Cover to Highway Dimension}
\author{Johannes Blum}{University of Konstanz, Germany  \and \url{https://algo.uni-konstanz.de/team/blum}}
{johannes.blum@uni-konstanz.de}{https://orcid.org/0000-0003-1102-3649}{}
\author{Yann Disser}{Technical University of Darmstadt, Germany}
{disser@mathematik.tu-darmstadt.de}{https://orcid.org/0000-0002-2085-0454}{}
\author{Andreas Emil Feldmann}{Charles University, Prague, Czechia \and 
\url{https://sites.google.com/site/aefeldmann/home}}{feldmann.a.e@gmail.com}
{https://orcid.org/0000-0001-6229-5332}
{Supported by the Czech Science Foundation GA{\v C}R (grant \#19-27871X).}
\author{Siddharth Gupta}{University of Warwick, United Kingdom \and \url{https://guptasid.bitbucket.io/}}
{siddharth.gupta.1@warwick.ac.uk}{https://orcid.org/0000-0003-4671-9822}{Supported by the Engineering and Physical Sciences Research Council (EPSRC) grant no: EP/V007793/1.}
\author{Anna Zych-Pawlewicz}{University of Warsaw, Poland}{anka@mimuw.edu.pl}
{https://orcid.org/0000-0002-5361-8969}{\flag{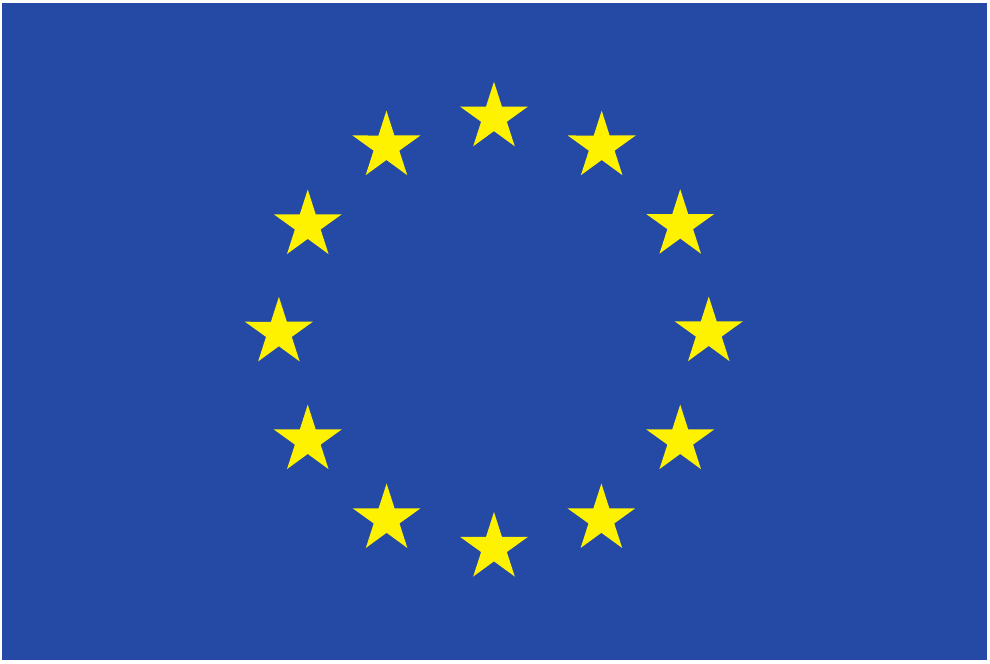}\flag{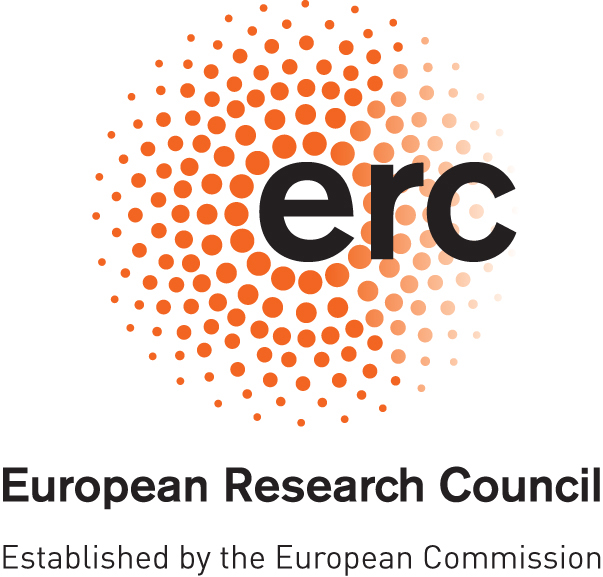}This work is part of the project CUTACOMBS that has received funding from the European Research Council (ERC) under the European Unions Horizon 2020 research and innovation programme (grant agreement No. 714704).}
\authorrunning{J. Blum, Y. Disser, A. E. Feldmann, S. Gupta, A. Zych-Pawlewicz}
\keywords{
sparse hitting set,
fair vertex cover,
highway dimension
}%
\begin{document}

\maketitle

\begin{abstract}
We consider the \pname{Sparse Hitting Set (\shs)} problem, where we are given a 
set system~$(V,\mc{F},\mc{B})$ with two families $\mc{F},\mc{B}$ of subsets of 
the universe $V$. The task is to find a hitting set for~$\mc{F}$ that minimizes 
the maximum number of elements in any of the sets of~$\mc{B}$. This generalizes 
several problems that have been studied in the literature. Our focus is on 
determining the complexity of some of these special cases of \shs with respect 
to the \emph{sparseness}~$k$, which is the optimum number of hitting set 
elements in any set of $\mc{B}$ (i.e., the value of the objective function).

For the \pname{Sparse Vertex Cover (\svc)} problem, the universe is given by 
the vertex set~$V$ of a graph, and $\mc{F}$ is its edge set. We prove 
NP-hardness for sparseness $k\geq 2$ and polynomial time solvability for $k=1$. 
We also provide a polynomial-time $2$-approximation algorithm for any $k$. A
special case of \svc is \pname{Fair Vertex Cover (\fvc)}, where the family
$\mc{B}$ is given by vertex neighbourhoods. For this problem it was open
whether it is FPT (or even XP) parameterized by the sparseness~$k$. We answer
this question in the negative, by proving NP-hardness for constant~$k$. We
also provide a polynomial-time $(2-\frac{1}{k})$-approximation algorithm for
\fvc, which is better than any approximation algorithm possible for \svc or
the \pname{Vertex Cover} problem (under the Unique Games Conjecture).

We then switch to a different set of problems derived from \shs related to the 
\emph{highway dimension}, which is a graph parameter modelling transportation 
networks. In recent years a growing literature has shown interesting algorithms 
for graphs of low highway dimension. To exploit the structure of such graphs, 
most of them compute solutions to the \pname{$r$-Shortest Path Cover (\spc)} 
problem, where $r>0$, $\mc{F}$ contains all shortest paths of length between 
$r$ and $2r$, and $\mc{B}$ contains all balls of radius $2r$. It is known that 
there is an XP algorithm that computes solutions to \spc of sparseness at most
$h$ if the input graph has highway dimension $h$. However it was not known 
whether a corresponding FPT algorithm exists as well. We prove that \spc and 
also the related \pname{$r$-Highway Dimension (\hd)} problem, which can be used 
to formally define the highway dimension of a graph, are both W[1]-hard. 
Furthermore, by the result of Abraham et al.~[ICALP~2011] there is a 
polynomial-time $O(\log k)$-approximation algorithm for \hd, but for \spc such 
an algorithm is not known. We prove that \spc admits a polynomial-time $O(\log 
n)$-approximation algorithm.

\end{abstract}

\section{Introduction}

In this paper, we study the problem of finding a sparse hitting set. That is, 
we are given a set system~$(V,\mc{F},\mc{B})$ on universe $V$ with two set 
families $\mc{F},\mc{B}\subseteq 2^V$, and a feasible solution is a set 
$H\subseteq V$ that hits (i.e.,~intersects) every set of~$\mc{F}$. Instead of 
minimizing the overall size of the solution however, we think of the sets of 
$\mc{B}$ as being small and we would like to distribute the solution $H$ among 
the sets in $\mc{B}$ as evenly as possible. Intuitively and depending on the 
context, the sets in $\mc{B}$ are balls in some metric and the hitting set 
should be sparse within them. That is, we want to find a hitting set for 
$\mc{F}$ that minimizes the largest intersection with the sets of $\mc{B}$. 
Formally, the \pname{Sparse Hitting Set} (\shs) problem with 
input~$(V,\mc{F},\mc{B})$ is defined by the following integer linear program 
(ILP) with indicator variables $x_v$ for each $v\in V$ encoding membership in 
the solution $H\subseteq V$.
\begin{align*}
	\min k \text{ such that: }\qquad 
	\sum_{v\in F} x_v &\geq 1 &\forall F &\in\mc{F} 
	\tag{\shs-ILP}\label{shs-ILP}\\
	\sum_{v\in B} x_v &\leq k &\forall B &\in\mc{B}\\
	x_v &\in\{0,1\} &\forall v &\in V
\end{align*}

The \shs problem generalizes several problems studied in the literature, with 
applications in for instance cellular~\cite{DBLP:conf/cocoon/KuhnRWWZ05}, 
communication~\cite{lin1989fair}, and 
transportation~\cite{abraham2010highway,knop2019parameterized} networks. Our aim 
in this paper is to determine the complexity of some basic variants of \shs, and 
we are specifically interested in the complexity depending on the 
\emph{sparseness}, which is the solution value $k$ of \eqref{shs-ILP}.
In general, \shs contains the \pname{Hitting Set} problem by 
setting~$\mc{B}=\{V\}$, and thus does not admit any $g(k)$-approximation in 
$f(k)\polyn$ time~\cite{karthik2019parameterized}, for any computable functions 
$f$ and~$g$, where $n=|V|$, under ETH. 

\subparagraph*{Sparse Vertex Cover.} A much easier special case of 
\pname{Hitting Set} is the well-known \pname{Vertex Cover} problem: for the 
\pname{Sparse Vertex Cover (\svc)} problem the set system is given by a graph 
$G=(V,E)$ so that $\mc{F}=E$ and~$\mc{B}\subseteq 2^V$.
We show that this problem is NP-hard for any $k \geq 2$, even on very simple
input graphs.

\begin{theorem}[restate=svcHard,name=]\label{thm:svc-hard}
\svc is \textup{NP}-hard for any $k\geq 2$, even if the input graph is a 
matching. 
\end{theorem}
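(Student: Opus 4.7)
The plan is to reduce from \pname{3-SAT}. Given a 3-CNF formula $\phi$ with variables $x_1,\dots,x_n$ and clauses $C_1,\dots,C_m$, I would build an \svc instance whose underlying graph is a matching, and whose family $\mc{B}$ encodes the clauses (with extra padding when $k>2$).

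For each variable $x_i$ I would add a matching edge $\{t_i,f_i\}$, interpreting the choice of $t_i$ (respectively $f_i$) in the hitting set as setting $x_i$ to true (respectively false). For each clause $C_j$ I would place in $B_j$ the three vertices that \emph{falsify} the literals of $C_j$: namely $f_i$ if $x_i \in C_j$, and $t_i$ if $\neg x_i \in C_j$. This already suffices for $k=2$. To handle arbitrary $k\geq 3$, I would additionally introduce $k-2$ dummy matching edges $\{d_\ell,d_\ell'\}$ and place both of their endpoints into every $B_j$.

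The main step is a WLOG cleanup: since the graph is a matching, any hitting set must contain at least one endpoint of each edge, and deleting a redundant second endpoint preserves the hitting property while never increasing any $|H\cap B|$. Hence I may assume $H$ selects exactly one endpoint per matching edge. Under this assumption each dummy edge contributes exactly $1$ to every $|H\cap B_j|$, and the remaining contribution equals the number of literals of $C_j$ falsified by the induced assignment. Thus sparseness at most $k$ is equivalent to every clause having at most two falsified literals, i.e.\ to the satisfiability of $\phi$. This cleanup is the only real subtlety; both directions of the equivalence then follow immediately, and the reduction is clearly polynomial.
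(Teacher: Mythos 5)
Your reduction is correct and is essentially the one in the paper: variables become matching edges, each clause becomes a set of $\mc{B}$ containing the endpoints that falsify its literals, padded with dummy matching edges, and the budget $k$ then forces at most two falsified literals per clause. The one structural difference is how the normalization ``exactly one endpoint per variable edge'' is obtained: the paper adds, for every variable, an extra set $\{x_i,\bar x_i,y_1,\bar y_1,\dots,y_{k-1},\bar y_{k-1}\}$ to $\mc{B}$ so that the budget itself forces $|H\cap\{x_i,\bar x_i\}|=1$, whereas you get the same effect by your cleanup step, which is valid precisely because the graph is a matching (dropping a redundant second endpoint cannot uncover any edge and never increases any $|H\cap B|$); your variant is slightly leaner, needing only $k-2$ dummy edges and no variable sets. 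One caveat: padding every clause with exactly $k-2$ dummies gives the equivalence only when every clause has exactly three literals --- for a two-literal clause the constraint $|H\cap B_j|\leq k$ becomes vacuous --- so you should reduce from \pname{exactly-3-Sat} (as the paper does) rather than from general \pname{3-Sat}, or adjust the padding to $k-|C_j|+1$ per clause.
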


Note that this hardness result implies that, unless P=NP, \svc does not admit an 
\emph{XP algorithm} with runtime~$n^{f(k)}$ for any function $f$ (the problem is 
paraNP-hard parameterized by the sparseness~$k$). This is in contrast to the 
\pname{Vertex Cover} problem, which is known to be \emph{fixed-parameter 
tractable (FPT)} parameterized by the solution size~$s$, which means that it 
can be solved much more efficiently in~$f(s)\polyn$ time for some function~$f$ 
(which can be shown~\cite{DBLP:conf/mfcs/ChenKX06} to be $1.2738^s$). On the other 
hand, we will show that for $k=1$ the \svc problem is polynomial-time solvable, 
which together with the previous hardness result settles the complexity of the 
\svc problem for every sparseness value~$k$.

\begin{theorem}[restate=svcPoly,name=]\label{thm:svc-poly}
\svc is polynomial time solvable for $k=1$. 
\end{theorem}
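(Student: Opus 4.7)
The plan is to encode the $k=1$ case of \svc as a 2-SAT instance. For each vertex $v\in V$, introduce a Boolean variable $x_v$, with the intended meaning that $x_v=\top$ exactly when $v$ belongs to the hitting set $H$. For every edge $uv\in E$ add the vertex-cover clause $(x_u\vee x_v)$, and for every ball $B\in\mc{B}$ and every unordered pair $\{u,v\}\subseteq B$ add the sparseness clause $(\neg x_u\vee\neg x_v)$. Every clause is binary, and the total number of clauses is bounded by $|E|+\sum_{B\in\mc{B}}\binom{|B|}{2}$, which is polynomial in the input size.

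Correctness is immediate from the construction. A truth assignment satisfies all vertex-cover clauses iff the set $H=\{v:x_v=\top\}$ intersects every edge, and it satisfies all sparseness clauses iff no ball of $\mc{B}$ contains two elements of $H$, i.e.\ iff $H$ has sparseness at most $1$. Hence feasible \svc solutions of sparseness $\leq 1$ are in bijection with satisfying assignments of the formula. Since 2-SAT admits a linear-time decision and witness-extraction algorithm (for instance via strongly connected components of the implication graph), we obtain in polynomial time either an explicit sparse vertex cover of sparseness $1$ or a certificate that none exists; combined with the trivial check whether $H=\emptyset$ suffices (i.e.\ $E=\emptyset$), this settles the optimum value in $\{0,1,\infty\}$ and finishes the proof of \Cref{thm:svc-poly}.

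There is no real obstacle here once the 2-SAT encoding is spotted; the only point that needs verification is that both families of constraints are genuinely binary, which follows by inspection. It is worth noting that the pairwise decomposition of $\sum_{v\in B}x_v\leq k$ into the clauses $(\neg x_u\vee\neg x_v)$ is valid precisely because $k=1$, so this approach does not extend to larger sparseness values, consistently with the hardness stated in \Cref{thm:svc-hard}.
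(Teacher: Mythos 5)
Your proof is correct and follows essentially the same route as the paper: encode membership in the vertex cover by Boolean variables, add a clause $(x_u\vee x_v)$ per edge and a clause $(\neg x_u\vee\neg x_v)$ per pair of vertices in a common ball, and invoke the linear-time algorithm for 2-SAT. The paper's argument is identical, including the observation that the pairwise decomposition of the sparseness constraint works precisely because $k=1$.
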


As \pname{Vertex Cover} is a special case of \svc with $\mc{B} = 
\{V\}$, any polynomial time $(2-\eps)$\hy{}approximation algorithm for \svc 
would refute the Unique Games Conjecture~(UGC)~\cite{williamson2011design}. On 
the positive side, we show that we can match this conditional approximation 
lower bound with a $2$-approximation algorithm. This means that \mbox{\svc} can 
be approximated as well as the \pname{Vertex Cover} problem, which also admits 
a $2$\hy{}approximation~\cite{williamson2011design} in polynomial time.

\begin{theorem}[restate=svcApprox,name=]\label{thm:svc-approx}
\svc admits a polynomial time $2$-approximation algorithm. 
\end{theorem}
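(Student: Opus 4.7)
My plan is to use standard LP-rounding, mirroring the classical $2$-approximation for ordinary \pname{Vertex Cover}. Consider the linear-programming relaxation of \eqref{shs-ILP} for the \svc instance, obtained by replacing $x_v\in\{0,1\}$ with $x_v\in[0,1]$ and treating $k$ as a continuous variable to be minimized. This LP has polynomial size in $|V|+|E|+|\mc{B}|$ and can be solved in polynomial time. Let $(x^*,k^*)$ be an optimal fractional solution; since the LP is a relaxation of \eqref{shs-ILP}, we have $k^*\leq k^{\mathrm{OPT}}$, where $k^{\mathrm{OPT}}$ is the sparseness of an optimal integral solution.

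Next I perform threshold rounding: set $H=\{v\in V : x_v^*\geq 1/2\}$. Two things need to be verified. First, $H$ is a vertex cover of $G$: for every edge $\{u,v\}\in E$ the LP constraint $x_u^*+x_v^*\geq 1$ forces $\max\{x_u^*,x_v^*\}\geq 1/2$, so at least one endpoint lies in $H$. Second, the sparseness of $H$ is controlled: for any $B\in\mc{B}$,
\[
|H\cap B| \;=\; |\{v\in B : x_v^*\geq 1/2\}| \;\leq\; 2\sum_{v\in B} x_v^* \;\leq\; 2k^*,
\]
where the first inequality uses $2x_v^*\geq 1$ for every counted $v$, and the second is the LP constraint associated with $B$. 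Combining these observations, $H$ is a vertex cover of sparseness at most $2k^*\leq 2k^{\mathrm{OPT}}$, which is the claimed ratio.

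Since each step is direct once the LP is written down, I do not anticipate a real obstacle. The only conceptual check is that the extra \emph{sparseness} constraints $\sum_{v\in B}x_v\leq k$ do not interfere with the classical rounding argument, and indeed they do not: the same threshold rounding at $1/2$ that preserves feasibility of the covering constraints also respects the sparseness constraints up to the standard factor of $2$, because the constraint is imposed both on the LP value $k^*$ and, after rounding, on the integral solution $H$.
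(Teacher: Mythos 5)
Your proposal is correct and matches the paper's proof essentially verbatim: both solve the LP relaxation of \eqref{shs-ILP} and threshold-round at $1/2$, using the edge constraints to certify the vertex cover and the sparseness constraints to bound $|H\cap B|\leq 2k^*\leq 2k^{\mathrm{OPT}}$. Nothing to add.
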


\subparagraph*{Fair Vertex Cover.}
A special case of \svc is the \pname{Fair Vertex Cover (\fvc)} problem where the 
family of sets~$\mc{B}$ is given by closed neighbourhoods, i.e., if $N[v]$ is 
the set containing vertex~$v$ and all neighbours of $v$ in $G$ then 
$\mc{B}=\{N[v]\mid v\in V\}$ (alternatively, $\mc{B}$ contains all balls of 
radius~$1$). The fairness constraint was introduced by \citet{lin1989fair} in 
the context of communication networks, and has since then been studied for 
several types of problems (cf.~\cref{sec:related}), including \pname{Vertex 
Cover}~\cite{knop2019parameterized,jacob2019deconstructing, 
masavrik2020parameterized}. In contrast to this paper, 
in~\cite{knop2019parameterized,jacob2019deconstructing} the problem is defined 
slightly differently by considering open neighbourhoods, i.e., 
$\mc{B}=\{N[v]\setminus\{v\}\mid v\in V\}$, and we call this version \ofvc. 
Notably, the parameterized complexity of \ofvc has been studied for a plethora 
of parameters, including treedepth, treewidth, feedback vertex set, modular 
width \cite{knop2019parameterized}, and the total solution size $|H|$
\cite{jacob2019deconstructing}, and most of these results also apply to 
\fvc with closed neighbourhoods. 

\citet{jacob2019deconstructing} observe that it is NP-hard to decide if a vertex 
cover of size $s$ exists, if every  neighbourhood is allowed to only contain at 
most $k$ vertices of the solution~$H$, for a given constant $k\geq 3$: this 
follows from the fact that \pname{Vertex Cover} is NP-hard on sub-cubic 
graphs~\cite{garey1974some}. While the authors of~\cite{jacob2019deconstructing} 
call this problem \pname{Fair Vertex Cover} as well, note that this is 
significantly different from the \fvc problem studied in this paper as well as 
the \ofvc problem studied in~\cite{knop2019parameterized}.
In particular, on sub-cubic graphs both of these problems as defined here 
always trivially have a solution for $k\geq 3$,\footnote{Observe that it is never
necessary to pick a vertex $v$ and all its neighbours.} and thus the NP-hardness 
of \pname{Vertex Cover} on sub-cubic graphs does not immediately imply 
NP-hardness of \fvc or \ofvc. In fact, for the natural parameterization by the 
sparseness $k$ the complexity of \ofvc (and also \fvc) has so far been 
unknown.\footnote{Tomáš Masařík, personal communication.} We answer this open 
problem by showing NP-hardness of \fvc for $k\geq 3$ and of \ofvc for $k\geq 4$ 
on more complex input graphs when compared to \svc.

\begin{theorem}[restate=fvcHard,name=]\label{thm:fvc-hard}
\fvc is \textup{NP}-hard for any $k\geq 3$ and \ofvc is \textup{NP}-hard for 
any~$k\geq 4$, even on planar input graphs.
\end{theorem}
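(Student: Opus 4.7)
\emph{Plan.} I would reduce from an NP-hard planar source problem, most naturally \pname{Planar Vertex Cover} on graphs of bounded maximum degree or \pname{Planar 3-SAT}. The conceptual obstacle is that neither \fvc nor \ofvc has a direct bound on $|H|$; only the \emph{maximum} intersection with a closed (resp.\ open) neighbourhood matters. Consequently, the reduction from the excerpt's \cref{thm:svc-hard} does not carry over, because its matching gadgets leave closed neighbourhoods too sparse to constrain anything. Instead, the reduction must be designed so that the sparseness bound $k=3$ (resp.\ $k=4$) is \emph{just tight} in specific local gadgets, and the only configurations achieving it correspond to consistent Boolean or covering choices on the source instance.

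\emph{Gadget construction.} For the anchor case of \fvc with $k=3$, starting from a planar source instance, I would assemble a planar graph $G$ from three types of subgraphs. A \emph{variable gadget} is a small planar subgraph whose vertex covers respecting local sparseness $\leq 3$ fall into exactly two families, encoding \textsf{true} and \textsf{false}. \emph{Literal wires} are decorated subdivided paths that propagate the chosen value of a variable to each clause in which it occurs, each subdivision again designed so that $k=3$ is tight along the wire and the two consistent cover patterns along it correspond to the two truth values. Finally, a \emph{clause gadget} is a local structure joining three literal wires such that the sparseness $\leq 3$ at its central vertex is achievable iff at least one incoming wire arrives in its \textsf{true} configuration. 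The gadgets are laid out following a planar embedding of the source instance so that planarity is preserved.

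\emph{From $k=3$ to arbitrary $k$, and to \ofvc.} To lift the hardness to every $k\geq 3$, I would inflate the critical neighbourhoods at each gadget with $k-3$ extra vertices that can be forced into $H$ by attaching small auxiliary pendant structures; this raises the active threshold at those neighbourhoods from $3$ to $k$ without changing the combinatorics of the gadgets, and preserves planarity because each inflation is strictly local. For \ofvc, the same reduction applies with minor adaptations of the gadgets: since the centre vertex is not counted in its own open neighbourhood, every ``tight'' closed‑neighbourhood constraint of size $3$ translates to a tight open‑neighbourhood constraint of size either $3$ or $4$ depending on whether the centre is in $H$; arranging so that the relevant centres lie in $H$ shifts the threshold by one and gives hardness for $k\geq 4$. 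The principal obstacle is rigidity of the gadgets: because sparseness is a maximum over local constraints, the design must simultaneously (i) admit the two intended configurations per variable gadget, (ii) rule out all ``cheating'' covers that meet the local bound without encoding a valid assignment, and (iii) compose planarly along the planar embedding of the source instance — a case analysis that I expect to occupy most of the proof.
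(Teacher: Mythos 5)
Your plan points in the same general direction as the paper's actual proof --- a reduction from a planar satisfiability variant in which local sparseness budgets are made exactly tight at variable and clause gadgets, with forced ``padding'' vertices lifting the construction from the base case to general $k$ --- but as written it is a programme rather than a proof: every gadget is left as a black box, and the entire content of the theorem lies in exhibiting gadgets for which the case analysis you explicitly defer actually closes. The single lemma that makes such a construction work is never stated: in \fvc, any vertex $v$ with $\deg(v)>k$ must belong to $H$, since otherwise all of its more than $k$ neighbours lie in $H$ and $|H\cap N[v]|>k$. Attaching ``small auxiliary pendant structures'' does not by itself force a vertex into the cover; one must attach at least $k+1$ leaves and then verify that the leaves' own neighbourhood constraints stay satisfiable. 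This forcing mechanism is what the paper uses both for padding and for the clause gadget: a forced star centre $z_{j,0}$ adjacent to the literal vertices of $C_j$ and to $k-|C_j|$ further forced centres, so that its budget of $k$ leaves room for at most $|C_j|-1$ literal vertices in $H$, i.e.\ at least one literal stays outside $H$ and is read as true.

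Beyond that, two of your specific choices would need repair. One of your candidate source problems, \pname{Planar Vertex Cover}, cannot serve: \fvc carries no bound on $|H|$, so there is nothing onto which a cover-size budget can be transferred; the reduction must come from a satisfiability problem. Moreover, making the budgets at $N[x_i]$ and $N[\bar x_i]$ exactly tight requires controlling how many clause gadgets each literal touches, which is precisely why the paper reduces from \pname{planar 2P1N-3-Sat} (each variable occurring exactly twice positively and once negatively) with literals attached directly to clause centres, rather than from generic planar 3-SAT with literal wires; with wires you would have to redo this occurrence bookkeeping along every segment. Finally, treating \ofvc as a mere shift of the threshold by one is too optimistic: once the centre no longer counts in its own open neighbourhood, the closed-neighbourhood variable gadget breaks, and the paper replaces it by a structurally different depth-two tree whose root and all internal vertices are forced (each having degree $k+1$), so that the root's open neighbourhood already contains $k-1$ solution vertices and admits only one of $x_i,\bar x_i$. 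None of these gaps is unfixable, but each is exactly where the proof lives, and the proposal supplies none of them.
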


Thus, as for \svc, we can conclude that \fvc and \ofvc do not admit XP 
algorithms parameterized by~$k$. 
For the cases when $k\leq 2$, \citet{jacob2019deconstructing} provide a 
polynomial time algorithm that solves their version of \pname{Fair Vertex 
Cover}, which however also works for the \fvc and \ofvc problems as defined in 
this paper.
Hence this settles the complexity of \fvc for every value of $k$, and only 
leaves the value $k=3$ open for \ofvc.

In terms of approximation, interestingly we are able to obtain a slightly better 
algorithm for \fvc than for \svc, namely a $(2-\frac{1}{k})$-approximation. 
This beats the best possible approximation for \svc and \pname{Vertex Cover} 
under UGC~\cite{williamson2011design}. In particular, the following theorem 
implies that for the smallest value $k=3$ for which \fvc is NP-hard, we can 
obtain a solution of sparseness $5$. We leave open whether a solution of 
sparseness $4$ can be computed in polynomial time for \fvc if~$k=3$, and whether 
better approximation algorithms are possible for \ofvc.

\begin{theorem}[restate=fvcApprox,name=]\label{thm:fvc-approx}
\fvc admits a polynomial time $(2-\frac{1}{k})$-approximation algorithm. 
\end{theorem}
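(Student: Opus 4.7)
The plan is to LP-round and then post-process to a \emph{minimal} vertex cover. We relax the \shs{} ILP \eqref{shs-ILP} and solve it in polynomial time to obtain an optimal fractional pair $(x^*, k^*)$ with $k^* \le k$, where $k$ denotes the integer optimum sparseness. Let $H_0 := \{v \in V : x^*_v \ge 1/2\}$; since each edge constraint $x^*_u + x^*_v \ge 1$ forces at least one endpoint to be $\ge 1/2$, the set $H_0$ is a vertex cover. We then iteratively shrink this set by deleting any vertex $v$ whose entire neighborhood $N(v)$ lies in the current set (each edge incident to $v$ remains covered by its other endpoint), arriving at some $H^* \subseteq H_0$. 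By construction $H^*$ is a minimal vertex cover: every $v \in H^*$ has at least one neighbor outside $H^*$.

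We aim to show $|N[v] \cap H^*| \le 2k - 1 = (2 - 1/k)k$ for every $v \in V$. Set $m := |N[v] \cap H_0|$. Since each contributor has $x^*_u \ge 1/2$, we obtain $m/2 \le \sum_{u \in N[v]} x^*_u \le k^*$, so $m \le 2k^* \le 2k$. If $m \le 2k - 1$ then $|N[v] \cap H^*| \le |N[v] \cap H_0| = m \le 2k - 1$ and we are done. Otherwise $m \ge 2k$ combined with $m \le 2k^* \le 2k$ forces $m = 2k^* = 2k$, whence equality throughout the chain: every $u \in N[v]$ with $x^*_u \ge 1/2$ must satisfy $x^*_u = 1/2$ exactly, and every other $u \in N[v]$ must have $x^*_u = 0$.

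The key structural step for this tight case is that any edge constraint $x^*_u + x^*_w \ge 1$ with $x^*_u, x^*_w \in \{0, 1/2\}$ forces both variables to equal $1/2$. Hence $v$ is either isolated (and its sparseness is trivially $0$) or $x^*_v = 1/2$, which propagates $x^*_u = 1/2$ to every neighbor of $v$, so $N[v] \subseteq H_0$ and $|N[v]| = m = 2k$. Minimality of $H^*$ now yields the extra $-1$: if $v \in H^*$, then by minimality $v$ has a neighbor outside $H^*$; if $v \notin H^*$, then $v$ itself is a vertex of $N[v]$ missing from $H^*$. In either case $|N[v] \cap H^*| \le |N[v]| - 1 = 2k - 1$, as required.

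The main obstacle is this tight-case analysis: the chain of inequalities $m/2 \le \sum_{u \in N[v]} x^*_u \le k^* \le k$ must collapse to equalities simultaneously, and the resulting equalities must be propagated through the edge constraints to conclude $N[v] \subseteq H_0$ with $|N[v]| = 2k$. Once this structural conclusion is in hand, the minimal-VC shrinking step cleanly supplies the final $-1$ that sharpens the $2$-approximation of \cref{thm:svc-approx} to a $(2 - 1/k)$-approximation; this is precisely where the closed-neighborhood structure of $\mc{B}$ that distinguishes \fvc from general \svc is exploited.
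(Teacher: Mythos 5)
Your proof is correct, and although it shares the paper's algorithmic skeleton --- solve the LP relaxation of \eqref{shs-ILP}, round at threshold $1/2$, then iteratively delete vertices whose closed neighbourhood lies entirely inside the cover --- the argument that extracts the crucial $-1$ is genuinely different. The paper first guesses the optimum sparseness $k^\star$, adds the constraints $x_v=1$ for every vertex of degree greater than $k^\star$ (valid since such vertices must belong to any integral solution), and then splits by degree: for $v\notin D$ the pruning alone gives $|W\cap N[v]|\le|N[v]|-1\le k^\star$, while for $v\in D$ the forced value $x_v=1$ absorbs the extra unit in the LP bound, yielding $2k-1$. You instead keep the unmodified LP and run a tightness analysis: either the ball constraint at $N[v]$ has slack, so $|N[v]\cap H_0|\le 2k-1$ outright, or the chain $m/2\le\sum_{u\in N[v]}x^*_u\le k^*\le k$ collapses to equalities, which (propagated through the edge constraints, since values in $\{0,\tfrac12\}$ summing to at least $1$ must both equal $\tfrac12$) forces $N[v]\subseteq H_0$ with $|N[v]|=2k$, whereupon the pruning step necessarily drops at least one vertex of $N[v]$. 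Your route buys a single LP solve with no guessing of $k^\star$ and no modified program; the paper's route avoids the equality-propagation argument and incidentally gives the sharper bound $k^\star$ on low-degree neighbourhoods. Both exploit the closed-neighbourhood structure of $\mc{B}$ (what distinguishes \fvc from \svc) through exactly the same pruning step.
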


\subparagraph*{Shortest Path Cover and Highway Dimension.}
We now turn to a different set of problems derived from \shs, which as we shall 
see generalize \fvc. Given a value $r>0$ and an edge-weighted graph $G$, for 
the \pname{$r$-Shortest Path Cover (\spc)} problem the family $\mc{F}$ is given 
by shortest paths of length between $r$ and $2r$ and the family $\mc{B}$ is 
given by balls of radius~$2r$. That is, let $\mc{P}_r$ contain $S\subseteq V$ 
if and only if $S$ is the vertex set of a path in $G$, which is a shortest path 
(according to the edge weights) and whose length is in the range $(r,2r]$.
Furthermore, let $\dist(u,v)$ be the length of a shortest $u$-$v$-path and let
$B_r(v)=\{u\in V\mid\dist(u,v)\leq r\}$ denote the ball of radius $r$ centered 
at $v$. Then for the \spc problem, $\mc{F}=\mc{P}_r$ and $\mc{B}=\{B_{2r}(v)\mid 
v\in V\}$. 

The \spc problem finds applications in the context of the \emph{highway 
dimension}, which is a graph parameter introduced by \citet{abraham2010highway} 
to model transportation networks. To define the highway dimension, we define a 
problem related to \spc called \pname{$r$-Highway Dimension~(\hd)}, where for 
each vertex $v\in V$ the task is to find a hitting set for all shortest paths of 
length in~$(r,2r]$ intersecting the ball $B_{2r}(v)$, and we need to minimize 
the largest such hitting set. Note that compared to \spc the quantification is 
reversed, i.e., for \spc there is a hitting set that is small in every ball, 
while for \hd for every ball there is a small hitting set (thus \hd is not a 
special case of \shs). The \emph{highway dimension} of an edge-weighted graph 
$G$ is the smallest integer $h$ such that there is a solution to \hd of value at 
most $h$ in $G$ for every $r>0$.

There is empirical evidence~\cite{bast2007transit} that road networks have small 
highway dimension, and it has been conjectured~\cite{feldmann20181+varepsilon} 
that public transportation networks (especially those stemming from airplane 
networks) have small highway dimension as well.\footnote{In fact there are 
several definitions of the highway dimension, with the one presented here being 
well-suited for public transportation networks, 
cf.~\cite{feldmann20181+varepsilon,blum2019hierarchy}} Therefore, there has been 
some effort to devise algorithms~\cite{feldmann20181+varepsilon, 
abraham2010highway, DBLP:journals/algorithmica/Feldmann19, 
DBLP:journals/jcss/FeldmannS21, DBLP:journals/algorithmica/DisserFKK21, 
DBLP:journals/algorithmica/DisserFKK21, becker2018polynomial, 
jayaprakash2021approximation, braverman2021coresets, bohm2022hop} for problems 
on low highway dimension graphs that naturally arise in transportation networks. 
It is known~\cite{abraham2011vc} that if the highway dimension of a graph $G$ is 
$h$, then the \spc problem on $G$ has sparseness at most~$h$ for every~$r>0$, 
but not vice versa, as the sparseness of \spc can be much smaller than~$h$. 
However, since a solution to \spc consists of one hitting set $H\subseteq V$ for 
the whole graph, it is more convenient to work with algorithmically than the 
$n$ hitting sets for all balls of radius~$2r$ that form a solution to \hd. 
Therefore, algorithms exploiting the structure of graphs of low highway 
dimension typically compute a solution to the \spc problem for each of 
the~$O(n^2)$ relevant values of $r$ given by the pairwise distances between 
vertices. 

For graphs of low highway dimension, \citet{abraham2011vc} give an algorithm 
that for each relevant value of $r$ computes a solution to the \hd problem, in 
order to obtain a solution to \spc with sparseness at most the value of the \hd 
solution. While \citet{abraham2011vc} propose to use an approximation algorithm 
for \hd (see below), note that the \hd problem admits an XP algorithm with 
runtime $n^{O(k)}$, since for any ball~$B_{2r}(v)$ it can construct the set 
system given by all shortest paths of length in $(r,2r]$ intersecting 
$B_{2r}(v)$, for which it can then try every possible $k$-tuple of vertices as a 
solution. This algorithm can thus be used to compute solutions to \spc of 
sparseness at most~$h$ in $n^{O(h)}$ time if the input graph has highway 
dimension $h$. Interestingly, it is not possible to compute solutions of optimum 
sparseness for \spc using an XP algorithm due to the NP-hardness of \fvc: 
consider an \spc instance with unit edge weights and value $r=1/2$. Since every 
edge is a shortest path between its endpoints, the \spc problem on this instance 
is equivalent to \fvc. As argued above however, no XP algorithm exists for \fvc, 
unless P=NP.

In light of the growing amount of work on problems on low highway dimension 
graphs, it would be very useful to have a faster algorithm to solve \hd in order 
to compute a hitting set for \spc of corresponding sparseness. While it is known 
that computing the highway dimension is NP-hard~\cite{feldmann20181+varepsilon} 
and this also implies that \hd is NP-hard, \hd might still be FPT and allow 
algorithms with runtime $f(k)\polyn$ for some function $f$. However, we will
show that it is unlikely that such algorithms exist. In particular, we prove
that \hd is \textup{W[1]}-hard parameterized by the solution value $k$.
We also prove that 
\spc does not admit FPT algorithms (in particular, $k$ here denotes the optimum 
sparseness and not just an upper bound that we would obtain by solving \hd, as 
suggested above). While already the above reduction from \fvc to \spc excludes 
FPT algorithms for \spc, this only excludes such algorithms for very small 
values of $r$, in fact the smallest relevant value for~$r$ (as the problem 
becomes trivial for even smaller values). A priori it is not clear whether \spc 
admits FPT (or XP) algorithms for large values of $r$. In our reduction 
 however, the value of $r$ takes the largest relevant value, so 
that there exists a ball of radius $2r$ containing the whole graph.

\begin{theorem}[restate=hdHard,name=]\label{thm:hd-hard}
Both \hd and \spc are \textup{W[1]}-hard parameterized by their solution 
values~$k$, where $2r$ is the radius of the input graph. 
\end{theorem}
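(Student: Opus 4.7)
The plan is to design a parameterized reduction from a W[1]-hard problem (a natural candidate being \pname{Multi-Colored Clique} on $k$ colour classes) that produces, from an instance with parameter $k$, an edge-weighted graph $G'$ and a scale $r>0$ such that: (i) $G'$ has radius exactly $2r$, and (ii) the minimum number of vertices hitting every shortest path of length in $(r,2r]$ equals $f(k)$ iff the source instance is a yes-instance, for some fixed function~$f$.

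The structural role of (i) is to collapse both target problems onto the same combinatorial question. Because $G'$ has radius $2r$, some vertex $v_0$ satisfies $B_{2r}(v_0)=V(G')$. Plugging this ball into \eqref{shs-ILP} for \spc shows that the sparseness equals a minimum hitting set for~$\mc{P}_r$. For \hd, the per-ball value $h(v_0)$ equals this same quantity (every path in $\mc{P}_r$ intersects $V(G')=B_{2r}(v_0)$) and dominates $h(u)$ for every other $u$, so the \hd value is identical. A single reduction therefore delivers W[1]-hardness of both problems.

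The graph $G'$ is organised around the hub $v_0$ with gadgets hanging off. Each \emph{selection} gadget is a shortest path of length in $(r,2r]$ whose internal vertices encode the candidates of one colour class, so that any hitting set is forced to include at least one candidate per class. Additional gadgets of the same length type encode the pairwise constraints among candidates (for instance, per-pair gadgets enforcing that the chosen representatives of two colour classes are adjacent in the source graph), so that a budget of $f(k)$ hits suffices iff a multicoloured $k$-clique exists. All gadget vertices are placed so that their distance to $v_0$ is at most $2r$, giving~(i).

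The principal obstacle is the usual metric-control issue for highway-dimension reductions: one must certify that \emph{only} the designed gadget paths are shortest paths of length in $(r,2r]$ in~$G'$, and that no detour (in particular none via $v_0$ or across gadgets) shortens a gadget path or creates additional unintended members of $\mc{P}_r$ whose hits would inflate the sparseness. This is handled by inflating all weights by a common factor, inserting degree-two padding vertices inside each gadget to push alternative routes strictly above $2r$, and perturbing the hub-edge weights so that shortest-path uniqueness holds where needed. Once this calibration is in place the forward and backward directions are routine, the parameter transfers up to an additive constant, and we obtain W[1]-hardness of both \spc and \hd parameterized by the sparseness.
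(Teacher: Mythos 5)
Your overall architecture matches the paper's: a parameterized reduction from \pname{Clique}, a global hub vertex whose ball $B_{2r}(\cdot)$ contains the whole graph (so that the \hd value, the \spc sparseness, and the minimum hitting set for $\mc{P}_r$ all coincide and one reduction serves both problems), and a combination of selection gadgets, cross-gadget consistency, and metric calibration. The part of your argument establishing that radius $2r$ collapses \hd and \spc onto the same hitting-set question is correct and is exactly how the paper proceeds.

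However, there is a genuine gap at the combinatorial heart of the reduction, which you defer to ``routine'' verification. Your selection gadget picks \emph{a vertex per colour class} (a single path whose internal vertices are the candidates, forcing at least one hit), and you then posit unspecified ``per-pair gadgets enforcing that the chosen representatives of two colour classes are adjacent.'' It is far from clear that shortest-path structure can express an arbitrary adjacency relation between two independently selected vertices: a path in $\mc{P}_r$ is hit as soon as it contains \emph{one} solution vertex, so there is no direct way to make a single path ``require'' that two specific chosen vertices be adjacent, and set systems of shortest paths are heavily constrained (they have VC-dimension $2$). The paper sidesteps this by making each gadget $G_{i,j}$ select an \emph{edge} $(x,y)$ of the source graph rather than a vertex --- adjacency is then automatic --- and reduces all cross-gadget constraints to mere \emph{equality} checks ($G_{i,j}$ and $G_{j,i}$ agree; $G_{i,j}$ and $G_{i,j'}$ agree on the first coordinate). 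Even these equality checks require substantial machinery that your sketch does not contain: each gadget consists of four interlocking paths with positions indexed by the edges of $H$, and a dedicated lemma shows that any hitting set using only four hubs in a gadget must place them at a common position $\tau(x,y)$; dedicated synchronization vertices $\alpha_{i,j}^{(x,y)}$ and $\beta_i^x$ sit on paths of length exactly $r+1$ that are missed by a gadget's hubs precisely when the gadget represents the corresponding pair; and each gadget is duplicated $C=k^2$ times so that a tight budget count rules out ``cheating'' by spending one extra local hub instead of a synchronization hub. Without an explicit mechanism of this kind, your budget $f(k)$ cannot certify consistency, and the backward direction of the reduction fails. Your point about metric calibration (padding and weight inflation so that only intended paths land in $(r,2r]$) is a real concern and is also handled in the paper, but it is secondary to this missing selection-and-synchronization mechanism.
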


One caveat of this hardness result is that it does \emph{not} answer the 
question of whether computing the highway dimension is FPT or not. This is 
because the presented reduction only shows hardness of \hd for a large value 
$r$. However, for smaller values of~$r$ the solution value to \hd is unbounded 
in the constructed graph, and thus the graph does not have bounded highway 
dimension. This means that it might still be possible to compute the highway 
dimension in FPT time, but not using the existing tools provided by 
\citet{abraham2011vc}, where each value $r$ is considered separately. Instead, 
if such an algorithm exists it must consider the structure of the whole graph. 
We leave open whether there is such an algorithm.

As mentioned above, \citet{abraham2011vc} propose an approximation algorithm for 
\hd: under the assumption that all shortest paths are unique (which can always 
be achieved by slightly perturbing the edge lengths), \hd admits a polynomial 
time $O(\log k)$-approximation algorithm. Due to the fact that the sparseness of 
\spc can be a lot smaller than the solution value to~\hd,\footnote{as for 
instance witnessed by the graphs constructed in the reduction for 
\cref{thm:fvc-hard} and value $r=1/2$.} it is not known how to obtain such an 
algorithm for \spc. However, we prove the existence of a weaker $O(\log
n)$-approximation algorithm.

\begin{theorem}[restate=hdApprox,name=]\label{thm:hd-approx}
\spc admits a polynomial time $O(\log n)$-approximation algorithm. 
\end{theorem}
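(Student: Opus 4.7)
\textbf{Proof proposal for \cref{thm:hd-approx}.}

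The plan is to apply LP-based randomized rounding to~\eqref{shs-ILP} with an $O(\log n)$ scaling factor. Following the standard convention of~\cite{abraham2011vc}, we may assume that all shortest paths in $G$ are unique (via an infinitesimal lexicographic tiebreaking of the edge weights), so that $\mc{F}$ contains at most one path per ordered pair of vertices $(u,v)$ with $\dist(u,v) \in (r, 2r]$, giving $|\mc{F}| \le n^2$. Let $x^* \in [0,1]^V$ be an optimum fractional solution of value $k^* \le k$ to the LP relaxation of~\eqref{shs-ILP}, where $k$ denotes the integer optimum; the LP then has polynomially many constraints and is solvable in polynomial time.

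Fix $\alpha := c \ln n$ for a sufficiently large constant $c$, and form $H$ by independently including each $v \in V$ with probability $p_v := \min\{1, \alpha\, x^*_v\}$. For \emph{feasibility}, fix any $F \in \mc{F}$: either some $p_v = 1$ (so $F$ is hit deterministically) or $\sum_{v \in F} p_v \ge \alpha \sum_{v \in F} x^*_v \ge \alpha$, whence
\[ \Pr[F \cap H = \emptyset] \;\le\; \prod_{v \in F}(1-p_v) \;\le\; e^{-\alpha} \;=\; n^{-c}. \]
A union bound over the $|\mc{F}| \le n^2$ paths shows $H$ is a valid hitting set with probability at least $1 - n^{-(c-2)}$. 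For \emph{sparsity}, $\mathbb{E}[|H \cap B|] \le \alpha k^*$ for every $B \in \mc{B}$; moreover, since any $F \in \mc{F}$ is fully contained in $B_{2r}(v)$ for each $v \in F$, we have $k \ge 1$ whenever $\mc{F} \neq \emptyset$, so $\alpha k \ge c \ln n$ and a Chernoff bound combined with a union bound over the $n$ balls gives $|H \cap B| = O(\log n \cdot k)$ for all $B$ with high probability.

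For $c$ large enough both events hold simultaneously with probability $1 - o(1)$, yielding an $O(\log n)$-approximation; the algorithm can be derandomised via the method of conditional expectations (or by polynomially many independent trials). The main technical subtlety is the uniqueness assumption: without it $|\mc{F}|$ could be exponential in $n$ and the feasibility union bound would break down, which is precisely why one cannot directly argue about arbitrary shortest-path DAGs using a scaling factor as small as $O(\log n)$. We inherit the unique-shortest-paths convention from~\cite{abraham2011vc} and handle it via the same infinitesimal edge-weight perturbation used there.
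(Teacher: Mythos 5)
Your proposal is correct, but it takes a genuinely different route from the paper. The paper never touches the LP directly: it first proves two structural reductions --- any ball of $\mc{B}$ that does not fully contain some path of $\mc{P}_r$ is subsumed by a larger ball and can be discarded (\cref{obs:ignoreB}, \cref{lem:ignoreB}), and every path of $\mc{P}_r$ is itself contained in some ball of $\mc{B}$ --- which together show that the instance $(V,\mc{P}_r,\mc{B})$ is equivalent to the symmetric instance $(V,\mc{P}_r\cup\mc{B},\mc{P}_r\cup\mc{B})$. That symmetric instance is exactly \pname{Minimum Membership Set Cover}, so the paper invokes the $O(\log|U|)$-approximation of Kuhn et al.\ as a black box with $|U|=O(n^2)$. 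You instead re-derive the guarantee from scratch by randomized rounding of the \shs{} LP with scaling factor $\Theta(\log n)$: a union bound over the $O(n^2)$ unique shortest paths for feasibility, and Chernoff plus a union bound over the $n$ balls for sparseness, using $k\geq 1$ to absorb the additive slack. Your argument is more self-contained (no symmetrization lemmas, no external MMSC result) and in fact applies verbatim to any \shs{} instance with polynomially many covering sets, while the paper's route isolates the structural interchangeability of paths and balls and inherits the tightness discussion from the MMSC literature; both hinge equally on the unique-shortest-paths convention to keep $|\mc{P}_r|$ polynomial, which the paper needs for $|U|=O(n^2)$ and you need for the LP size and the feasibility union bound, so you are on the same footing there. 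One small caveat: ``polynomially many independent trials'' is not a derandomization but a Monte Carlo algorithm whose output can be verified deterministically (check all $O(n^2)$ paths and all $n$ balls), which suffices for the theorem as stated; if you want a deterministic algorithm, the conditional-expectations argument must combine the pessimistic estimator for coverage with the Chernoff moment-generating-function estimators for the $n$ ball constraints, which is standard but worth stating.
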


\subparagraph*{Dense Matching.}
Finally, in light of the above results for \svc, we also 
consider the dual \pname{Dense Matching} problem, where we are given a graph 
$G=(V,E)$ and the task is to find a matching $M\subseteq E$ maximizing the 
smallest number of matching edges induced by a set in the family $\mc{B}$, i.e., 
the minimum $|M\cap E(B)|$ over all~$B\in\mc{B}$, where $E(B)=\{\{u,v\}\in E\mid 
u,v\in B\}$. Despite the \pname{Maximum Matching} problem being polynomial-time 
solvable, we show that \pname{Dense Matching} does not admit a polynomial time 
$(2-\eps)$-approximation, even if~$\mc{B}$ is restricted to balls of radius two, 
unless P=NP. Interestingly, a matching $2$-approximation seems a lot harder to 
come by compared to \svc, and we leave open whether a constant approximation is 
possible for \pname{Dense Matching}.

\begin{theorem}[restate=matchingHard,name=]\label{thm:matching-hard}
It is NP-hard to approximate \pname{Dense Matching} within $2-\eps$ for any 
$\eps>0$, even if $\mc{B}=\{B_2(v)\mid v\in V\}$ where all edges have 
weight~$1$.
\end{theorem}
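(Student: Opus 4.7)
The plan is to establish the theorem via a gap-preserving reduction from $3$-\pname{SAT}. Given a $3$-\pname{SAT} formula $\phi$ with $n$ variables and $m$ clauses, I would construct in polynomial time an unweighted graph $G=(V,E)$ together with a scaling parameter $k$ (chosen as a function of $\eps$) so that: if $\phi$ is satisfiable, there is a matching $M$ of $G$ achieving $|M\cap E(B_2(v))|\ge 2k$ for all $v\in V$; if $\phi$ is unsatisfiable, every matching $M$ admits some vertex $v$ with $|M\cap E(B_2(v))|\le k+O(1)$. Since the ratio between YES and NO instances tends to $2$ as $k\to\infty$, this rules out any polynomial-time $(2-\eps)$-approximation unless $\mathrm{P}=\mathrm{NP}$.

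The construction uses two types of local gadgets. For each variable $x_i$ I introduce a \emph{variable gadget}, i.e.\ a small subgraph with exactly two perfect matchings $M_i^T, M_i^F$ of equal cardinality encoding the truth value of $x_i$; a subdivided $4$-cycle, or two parallel paths joined at the endpoints, is a suitable choice. For each clause $C_j$ I introduce a \emph{clause gadget} centred at a vertex $v_j$, with $k$ ``literal edges'' attached for each of the three literals of $C_j$. These $3k$ literal edges are wired, through auxiliary vertices at distance at most $2$ from $v_j$, into specific edges of the corresponding variable gadgets so that the $k$ literal edges belonging to a literal $\ell$ are simultaneously matchable (i.e.\ vertex-disjoint from the selected variable matching) if and only if that variable matching agrees with $\ell$. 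Additional private matching edges anchored at $v_j$ then calibrate $|M\cap E(B_2(v_j))|$ to be exactly $2k$ when at least one literal is satisfied, and at most $k+O(1)$ otherwise.

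The main obstacle is that $\mc{B}=\{B_2(v)\mid v\in V\}$ contains a ball around \emph{every} vertex, so the reduction must ensure that no non-clause ball becomes the binding constraint by accident. My plan to handle this is two-fold: first, attach a small padding gadget (e.g.\ a star or short path of private matching edges) at every non-clause vertex so that its radius-$2$ ball automatically contains at least $2k$ simultaneously matchable edges; and second, liberally subdivide long edges so that distinct variable gadgets and distinct clause gadgets lie pairwise at distance strictly greater than $2$, guaranteeing that each clause ball sees only its intended literal edges. Balancing subdivision lengths against the requirement that literal edges remain at distance at most $2$ from their clause centres is the main technical step; once this is done, the YES/NO analysis reduces to a straightforward comparison between the matching induced by a satisfying assignment and the best matching when no consistent truth assignment exists.
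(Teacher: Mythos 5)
Your high-level strategy (a gap reduction from \pname{3-Sat}) matches the paper's, but the amplification to a $2k$ versus $k+O(1)$ gap is both unnecessary and, as described, the source of a genuine gap. The paper's reduction works directly with the gap $2$ versus $1$: in the YES case every ball $B_2(v)$ contains two disjoint matching edges, and in the NO case some ball contains at most one. Since the objective is integral, this already rules out a $(2-\eps)$-approximation for every $\eps>0$, with no scaling parameter needed. The threshold $2$ is easy to meet locally --- small triangles and short cycles attached to each gadget suffice --- which is exactly what the paper's $x_i^{\ell}$-cycles and $x^{j,\ell}$-paths accomplish.

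Your version instead requires that \emph{every} vertex $v$ satisfy $|M\cap E(B_2(v))|\ge 2k$ in the YES case, and this is where the plan breaks down. A ball of radius $2$ around a vertex of low degree (a leaf of a padding star, an endpoint or interior vertex of a subdivided edge) simply cannot contain $2k$ pairwise disjoint edges: a star contributes at most one matching edge, and a path inside a radius-$2$ ball contributes at most two. Worse, the padding gadgets you attach to fix one vertex introduce new peripheral vertices whose own radius-$2$ balls are again too sparse, so the repair does not terminate. Forcing every $2$-neighbourhood to contain $2k$ disjoint matched edges makes the graph locally dense everywhere, which then conflicts with your need to keep the clause balls down to $k+O(1)$ in the NO case; you flag ``balancing subdivision lengths'' as the main technical step, but the real obstruction is this tension, and nothing in the proposal resolves it. The clause/variable wiring and the NO-case analysis are also left unspecified, so the central claim --- that unsatisfiability forces some ball down to $k+O(1)$ --- is unsupported. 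Dropping the amplification and aiming for the $2$-versus-$1$ gap, as the paper does, removes all of these difficulties at once.
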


\subsection{Related Work}\label{sec:related}
Apart from the work cited above, we here list some additional related work. 
\citet{kanesh2021parameterized} study the \pname{Fair Feedback Vertex Set} 
problem, where the family $\mc{F}$ contains all vertex sets of cycles of the 
input graph (in this case $\mc{F}$ is not part of the input). They prove 
results on the parameterized complexity of several versions of this problem, 
where the considered parameters include treewidth, treedepth, neighbourhood 
diversity, the total solution size, and the maximum vertex degree. 
\citet{jacob2019deconstructing} consider the parameterized complexity of the 
\pname{Fair Set} and \pname{Fair Independent Set} problems, but also 
\pname{$\Pi$-Fair Vertex Deletion}, where $\Pi$ is any property expressible in 
first order (FO) logic. \citet{knop2019parameterized} study 
\pname{$\Pi$\hy{}Fair Vertex Deletion} for properties $\Pi$ expressible in 
monadic second order ($\text{MSO}_1$) logic parameterized by the twin cover 
number. They also consider \fvc parameterized by treedepth, feedback vertex 
number, and modular width. \citet{agrawal2022parameterized} study the 
parameterized complexity of the \pname{Minimum Membership Dominating Set} 
problem, where $\mc{F}=\mc{B}=\{N[v]\mid v\in V\}$, and consider 
parameterizations by pathwidth, sparseness, and vertex cover number.

While in this paper we study \shs problems on graphs where the universe is the 
set of vertices, another line of work studies variants of \shs when the universe 
is the edge set. For instance, the work of \citet{lin1989fair} that introduced 
the fairness constraint, studies the \pname{Fair Feedback Edge Set} problem, 
where the family~$\mc{F}$ contains the edge sets of all cycles of the input 
graph. \citet{masavrik2020parameterized} consider the parameterized complexity 
of the \pname{$\Pi$-Fair Edge Deletion} problem, where $\Pi$ is a property 
expressible in FO logic or in MSO logic. For each of these problems they study 
parameterizations by the treewidth, pathwidth, treedepth, feedback vertex set 
number, neighbourhood diversity, and vertex cover number. \citet{kolmanfair} 
study the \pname{$\Pi$-Fair Edge Deletion} problem on graphs of bounded 
treewidth, where $\Pi$ is any property expressible in MSO logic. They also give 
tight polynomial-time $O(\sqrt{n})$-approximation algorithms for \pname{Fair Odd 
Cycle Transversal} and \pname{Fair Min Cut}, where the family $\mc{F}$ contains 
all edge sets of odd cycles and $(s,t)$-paths for given vertices $s$ and~$t$, 
respectively. Another notable problem is \pname{Min Degree Spanning Tree}, where 
the family $\mc{F}$ consists of every edge cut under the fairness constraint. 
\citet{furer1992approximating} prove that the problem is NP-hard but a solution 
of sparseness $k+1$ can be computed in polynomial-time.

Regarding the complexity of computing the highway dimension, it is interesting 
to note that \citet{abraham2011vc} show that any set system given by unique 
shortest paths has VC\hy{}dimension~$2$ (this observation also leads to the 
above mentioned $O(\log k)$-approximation algorithm for \hd). At the same 
time, \citet{bringmann2016hitting} prove that the \pname{Hitting Set} problem 
is W[1]-hard for set systems of VC-dimension~$2$. Hence it is intriguing to 
think that the latter reduction could possibly be modified to also prove 
W[1]-hardness for \hd or \spc. However, it seems that shortest paths exhibit a 
lot more structure than general set systems of VC-dimension~$2$, and thus it is 
unclear how to obtain a hardness result for \hd or \spc based 
on~\cite{bringmann2016hitting}. Instead, a more careful reduction as provided in 
\cref{thm:hd-hard} seems necessary.

\section{Sparse Vertex Cover}\label{sec:svc}
In this section we consider the \svc problem and start by proving NP-hardness for any
$k \geq 2$.

\svcHard*
\begin{proof}
We reduce from a variant of the satisfiability problem
called \pname{exactly-3-Sat}, meaning that all clauses contain
exactly three literals. This problem was shown to be NP-complete 
in~\cite{GareyJohnson/79}.
For a set of variables $X=\{x_{i}\}_{i}$, we use the notation 
$\bar{X}:=\{\bar{x}_{i}\}_{i}$.

Let an instance of \pname{exactly-3-Sat}
be given by a set of variables $X=\{x_{i}\}_{i=1,\dots,n}$
and a set of clauses $\mathcal{C}=\{C_{j}\}_{j=1,\dots,m}$ with $C_{j}\subset 
X\cup\bar{X},|C_{j}|=3$.
We define the graph $G=(V,E)$ by $V=X\cup\bar{X}\cup
\{y_{i},\bar{y}_{i} \mid i \in \{1, \dots, k-1\}\}$
and $E=\{\{x_{i},\bar{x}_{i}\} \mid i\in\{1,\dots,n\}\} \cup
\{\{y_{i},\bar{y}_{i}\} \mid i\in\{1,\dots,k-1\}\}$. We further let
$\tilde{C} = C \cup \{y_{1},\bar{y}_{1},\dots, y_{k-2},\bar{y}_{k-2}\}$
and choose
$\mc{B}=\{\{x_{i},\bar{x}_{i},y_{1},\bar{y}_{1},\dots,
y_{k-1},\bar{y}_{k-1}\} \mid i\in\{1,\dots,n\}\} \cup
\{ \tilde{C} \mid C \in \mc{C} \}$.
This construction can be carried out in linear time and~$G$ is a
matching. For NP-hardness, it remains to show that $G$ has a vertex cover
$H \subseteq V$ satisfying $|H \cap B| \leq k$ for every $B \in \mc{B}$
if and only if the given $\pname{exactly-3-Sat}$ instance has a
satisfying assignment. 

To see this, first assume that the given $\pname{exactly-3-Sat}$
instance has a satisfying assignment $\alpha\colon X\to\{0,1\}$ and
extend $\alpha$ to $\bar{X}$ by letting $\alpha(\bar{x}):=1-\alpha(x)$.
We construct the vertex cover $H=\{x\in X\cup\bar{X} \mid \alpha(x)=0\}\cup
\{y_{1}, \dots, y_{k-1}\}$.
Indeed, $H$ is a vertex cover, since every edge of $G$ is covered by exactly
one of its endpoints.
It also follows that, for every $i\in\{1,\dots,n\}$, we have 
$\left|\{x_{i},\bar{x}_{i},y_{1},\bar{y}_{1},\dots,y_{k-1},\bar{y}_{k-1}\}
\cap H\right|=k$.
By definition of $\alpha$, for every $C\in\mathcal{C}$, we have
$\sum_{x\in C}\alpha(x)\geq1$, hence $\left|H\cap \tilde{C}\right|=
\left|H \cap C\right| + k-2 = \sum_{x\in C}\alpha(\bar{x}) + k-2=
3-\sum_{x\in C}\alpha(x) + k-2 \leq k$.

Conversely, suppose that there exists a vertex
cover $H\subseteq V$ with $\left|H \cap B\right|\leq k$ for all $B\in\mc{B}$.
We claim that $\alpha(x)=\left|\{\bar{x}\}\cap H\right|$ defines
a satisfying assignment for the given $\pname{exactly-3-Sat}$ instance.
Observe that we must have $\left|\{x_{i},\bar{x}_{i}\}\cap H\right|\geq 1$
for all $i\in\{1,\dots,n\}$ and $\left|\{y_{i},\bar{y}_{i}\}\cap H\right|\geq 1$
for all $i\in\{1,\dots,k-1\}$, since $H$ needs to cover all edges.
Since $\{x_{i},\bar{x}_{i},y_{1},\bar{y}_{1},\dots,y_{k-1},\bar{y}_{k-1}\}
\in\mc{B}$, it
follows that $\left|\{x_{i},\bar{x}_{i},y_{1},\bar{y}_{1},\dots,
y_{k-1},\bar{y}_{k-1}\}\cap H\right|\leq k$
for all $i\in\{1,\dots,n\}$. Together, we obtain 
$\left|\{x_{i},\bar{x}_{i}\}\cap H\right|=1$
for all $i\in\{1,\dots,n\}$. We can therefore extend $\alpha$ to
$\bar{X}$ by setting $\alpha(\bar{x})=1-\alpha(x)=\left|\{x\}\cap H\right|$.
Finally, for $C\in\mc{C}$, we have $\tilde C\in\mc{B}$ and thus
$\left|H\cap \tilde C\right|\leq k$ and moreover
$\left| H \cap \tilde{C} \right| \geq \left| H \cap C \right| + k-2$, which
implies $\left| H \cap C \right| \leq 2$.
It follows that $\sum_{x\in 
C}\alpha(x)=3-\sum_{x\in C}\alpha(\bar{x})=3-\sum_{x\in C}\left|\{x\}\cap 
U\right|=3-\left|H\cap C\right|\geq1$,
thus $\alpha$ is a satisfying assignment.
\end{proof}

We can observe that \cref{thm:svc-hard} also shows that \svc does not admit
a $(3/2 - \eps)$-approximation algorithm for any $\eps>0$, unless P=NP. This
follows from the fact that for~$k=2$, such an algorithm would be able to
determine whether a given instance of \svc admits a solution of sparseness
$2 \cdot (3/2 - \eps) < 3$, i.e., of optimal sparseness $2$.

Let us now consider the \svc problem for sparseness $k=1$. We show that in this
case, \svc can be reduced to the $\pname{2-SAT}$ problem, which is commonly
known to admit a linear time algorithm. This yields the following theorem.

\svcPoly*
\begin{proof}
The instance of the \svc problem is given by a graph $G=(V,E)$ and a set of 
balls~$\mc{B}\subseteq 2^V$. Given this instance, we construct a 
$\pname{2-Sat}$ formula $\phi$, which is solvable if and only if the \svc 
instance has a solution. Moreover, we can reconstruct the fair vertex cover for 
$(V,E,\mc{B})$ given a satisfying assignment to $\phi$.

To construct $\phi$, we first assign a variable $x_v$ to each vertex $v \in V$. 
Next, for every edge $\{u,v\} \in E$ we create a clause $(x_u \vee x_v)$ and add it 
to $\phi$, so that we are guaranteed that any satisfying assignment will 
correspond to a valid vertex cover. Now we have to enforce, that for each ball 
$B \in \mc{B}$, at most one variable in the set $\{ x_v \}_{v \in \mc{B}}$ is 
set to true. This is done by adding ${|B|}\choose{2}$ clauses: for each pair 
$v,u \in B, u \neq v$ we add a clause $(\bar{x}_v \vee \bar{x}_u)$ to enforce 
that $x_v$ and $x_u$ cannot be both true. In this way, we ensure that only one 
variable of the set $\{ x_v \}_{v \in \mc{B}}$ is set to true. Thus, the final 
formula $\phi$ takes the following form:
\[
\phi=\bigwedge_{\{u,v\} \in E} (x_u \vee x_v) \wedge \bigwedge_{B \in \mc{B}, 
u,v \in B: u \neq v} (\bar{x}_v \vee \bar{x}_u)
\]

Given a satisfying assignment for $\phi$, we reconstruct the solution to 
$(V,E,\mc{B})$ by taking the vertices whose variable was set to true. We already 
argued that such a solution is feasible for \svc with $k=1$. In the opposite 
direction, given a solution to \svc with~$k=1$, we find an assignment by setting 
the variables corresponding to the vertices of the solution to true: the clauses 
corresponding to edges are then satisfied due to the solution being a vertex 
cover, and the remaining clauses corresponding to $\mc{B}$ are satisfied because 
the solution picks at most one vertex from each $B \in \mc{B}$.
\end{proof}

Finally, we show how to obtain a $2$-approximation algorithm for \svc.
This approximation factor is optimal unless the Unique Games Conjecture fails,
as \pname{Vertex Cover} is a special case of \svc with $\mc{B}=\{V\}$.

\svcApprox*
\begin{proof}
We consider the relaxation of \eqref{shs-ILP} for a given graph $G=(V,E)$:
\begin{align}
 \min k \text{ such that: }\qquad 
 x_u+x_v &\geq 1 &\forall uv &\in E \label{eq:VC}\\
 \sum_{v\in B} x_v &\leq k &\forall B &\in\mc{B} \label{eq:sparse}\\
 x_v &\geq 0 &\forall v &\in V \label{eq:nneg}
\end{align}

Note that in any feasible solution to this LP, for any edge $\{u,v\}$ at least one 
of the two variables $x_u$ and $x_v$ has value at least $1/2$, due to 
constraints~\eqref{eq:VC} and~\eqref{eq:nneg}. Thus the set $W=\{v\in V\mid 
x_v\geq 1/2\}$ of all vertices with value at least $1/2$, is a vertex cover for 
the input graph. The sparseness of this solution can be bounded 
using~\eqref{eq:sparse} for any set $B\in\mc{B}$:
\[
|W\cap B|\leq 2 \sum_{v\in B} x_v \leq 2k
\]
Thus solving the above LP relaxation optimally in polynomial time and then 
outputting the set $W$, gives a $2$-approximation algorithm for \svc.
\end{proof}

\section{Fair Vertex Cover}
Let us now consider the \pname{(Open-)Fair-VC} problem, where the balls
$\mc{B}$ are given by (open) vertex neighborhoods.
We first show NP-hardness of \fvc and \ofvc for $k \geq 3$ and $k \geq 4$,
respectively.

\fvcHard*
\begin{proof}
We reduce from the \pname{planar 2P1N-3-Sat} problem.
In this variant of satisfiability, all clauses contain two or three
literals, and we may assume that every variable appears exactly twice
as a positive literal and exactly once as a negative literal over
all clauses. In addition, we may assume that the bipartite graph connecting
clauses to the variables they contain is planar. This variant of satisfiability
was shown to be NP-complete in~\cite{ManuchGaur/08}.

We first consider the \fvc problem and later show how to modify our reduction
for \ofvc.
Let an instance of \pname{planar 2P1N-3-Sat} be given by a set of variables 
$X=\{x_{i}\}_{i=1,\dots,n}$ and a set of clauses 
$\mathcal{C}=\{C_{j}\}_{j=1,\dots,m}$ with $C_{j}\subset X\cup\bar{X}$, $|C_j| 
\in \{2,3\}$.
We define the graph $G=(V,E)$ by 
\[
	V = \bigcup_{i=1}^{n}\left(\{x_{i},\bar{x}_{i}\} \cup \bigcup_{s=1}^{k-2} \bigcup_{r=0}^k \{y_{i,r}^s\} \right) \cup
	    \bigcup_{j=1}^{m}\left(\bigcup_{r=0}^k \{z_{j,r}\} \cup \bigcup_{s=1}^{k-|C_j|} \bigcup_{r=0}^{k} \{q_{j,r}^{s} \} \right)
\]
 and 
\begin{align*}
  E & = \bigcup_{i=1}^{n}\left(\{\{x_i,\bar{x}_i\}\} \cup \bigcup_{s=1}^{k-3} \{\{x_i,y_{i,0}^s\}\} \cup \bigcup_{s=1}^{k-2} \{\{\bar{x}_i,y_{i,0}^s\}\} \cup \bigcup_{s=1}^{k-2} \bigcup_{r=1}^k \{\{y_{i,0}^s,y_{i,r}^s\}\} \right) \\
    & \cup\bigcup_{j=1}^{m}\left(\bigcup_{x\in C_{j}}\{\{x,z_{j,0}\}\} \cup
	\bigcup_{r=1}^k\{\{z_{j,0},z_{j,r}\}\} \cup \bigcup_{s=1}^{k-|C_j|} \left( \{z_{j,0},q_{j,0}^{s}\} \cup \bigcup_{r=1}^{k} \{q_{j,0}^{s},q_{j,r}^{s}\} \right) \right). \\
\end{align*}
This construction (illustrated in \cref{fig:fvc}) can be carried out in linear time and $G$ is planar.

\begin{figure}[t]
\centering
\includegraphics{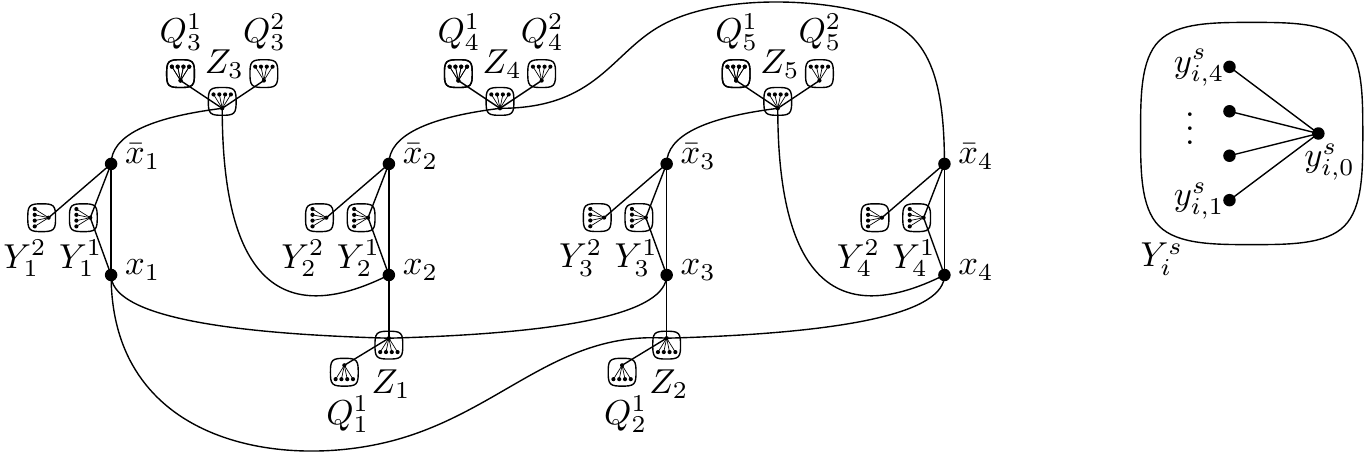}
\caption{Left: The graph $G$ for the formula $(x_1 \vee x_2 \vee x_3) \wedge (x_1 \vee x_3 \vee x_4) \wedge (\bar{x}_1 \vee x_2) \wedge (\bar{x}_2 \vee \bar{x}_4) \wedge (\bar{x}_3 \vee x_4)$ and $k=4$. Right: A star $Y_i^s$ with center $y_{i,0}^s$ and leaves $y_{i,1}^s,\dots,y_{i,4}^s$. Similarly, $Z_j$ and $Q_{j}^s$ denote stars with centers $z_{j,0}$ and $q_{j,0}^s$, and leaves $z_{j,1}\,\dots,z_{j,4}$ and $q_{j,1}^s,\dots,q_{j,4}^s$, respectively.}\label{fig:fvc}
\end{figure}
For NP-hardness of \fvc, it remains to show that $G$ has a vertex cover $H 
\subset V$ satisfying $\left|H \cap N[v]\right| \leq k$ for every $v \in V$
if and only if the given \pname{planar 2P1N-3-Sat} instance has
a satisfying assignment. 

To see this, first assume that the given \pname{planar 2P1N-3-Sat}
instance has a satisfying assignment $\alpha\colon X\to\{0,1\}$ and
extend $\alpha$ to $\bar{X}$ by letting $\alpha(\bar{x}):=1-\alpha(x)$.
We construct the vertex cover $H=\{x\in X\cup\bar{X} \mid \alpha(x)=0\}\cup\bigcup_{i=1}^{n}\bigcup_{s=1}^{k-2}\{y_{i,0}^s\}\cup\bigcup_{j=1}^{m}\left(\{z_{j,0}\}\cup\bigcup_{s=1}^{k-|Cj|}\{q_{j,0}^{s}\}\right)$.
Indeed, $H$ is a vertex cover, since all edges are of the form $\{x,\bar{x}\}$
or incident to some $y_{i,0}^s$, $z_{j,0}$, or $q_{j,0}^{s}$.
Also, $\left|H \cap N[v]\right|=1$
for $v\in\bigcup_{i=1}^{n}\bigcup_{s=1}^{k-2}\bigcup_{r=1}^{k} \{y_{i,r}^{s}\}\cup\bigcup_{j=1}^{m}\bigcup_{r=1}^{k}\{z_{j,r}\}\cup\bigcup_{j=1}^{m}\bigcup_{s=1}^{k-|C_j|}\bigcup_{r=1}^{k}\{q_{j,r}^{s}\}$,
$\left|H \cap N[v]\right|=2$ for
$v\in\bigcup_{i=1}^{n}\bigcup_{s=1}^{k-2}\{y_{i,0}^{s}\}\cup\bigcup_{j=1}^{m}\bigcup_{s=1}^{k-|C_j|}\{q_{j,0}^s\}$,
and $\left|H\cap N[v]\right|=k$ for $v\in X\cup\bar{X}$. It
remains to consider $v=z_{j,0}$ for some $j\in\{1,\dots,m\}$. Observe
that $\sum_{x\in C_{j}}\alpha(x)\geq1$ since $\alpha$ is a satisfying
assignment. 
It holds that $\left|H \cap N[z_{j,0}]\right| = 1 + k-|C_j| + \sum_{x\in
C_{j}}\left|\{x\}\cap H\right| = 1 + k-|C_j| + \sum_{x\in C_{j}}(1-\alpha(x))\leq
1 + k-|C_j|+|C_j|-1=k$.
In either case, we conclude that $\left|H \cap N[v]\right| \leq k$.

Conversely, suppose that there exists a vertex cover $H \subseteq V$ with
$\left|H \cap N[v]\right|\leq k$ for all $v\in V$. Let $i\in\{1,\dots,n\}$ and
$s\in\{1,\dots,k-2\}$. Since $H$ is a vertex cover with $\left|H \cap
N[y_{i,0}^{s}]\right|\leq k$ and $\mathrm{deg}(y_{i,0}^{s})>k$,
we must have $y_{i,0}^{s}\in H$. Similarly, we must have $z_{j,0}\in H$
for all $j\in\{1,\dots,m\}$ and $q_{j,0}^{s}\in H$ for all
$j\in\{1,\dots,m\},s\in\{1,\dots,k-|C_j|\}$.
We claim that $\alpha(x)=\left|\{\bar{x}\}\cap H\right|$ defines
a satisfying assignment for the given \pname{planar 2P1N-3-Sat}
instance. To see this, first observe that $\left|\{x,\bar{x}\}\cap H \right|=1$
for all $x\in X$ since $H$ is a vertex cover and $\left|H \cap N[x] \right|\leq
k$. 
We can therefore extend $\alpha$ to $\bar{X}$ by setting
$\alpha(\bar{x})=1-\alpha(x)=\left|\{x\}\cap H\right|$.
Recall that for for all $j\in\{1,\dots,m\}$ we have $\left|H \cap N[z_{j,0}] 
\right|\leq k$ and \mbox{$\left\{z_{j,0},q_{j,0}^{1}, \dots, 
q_{j,0}^{k-|C_j|}\right\} \subseteq H$,} which implies $\left|H \cap 
\left(N[z_{j,0} \setminus \left\{z_{j,0},q_{j,0}^{1}, \dots, 
q_{j,0}^{k-|C_j|}\right\}\right)\right| \leq k-(1+k-|C_j|) = |C_j|-1$.
With this, for $j\in\{1,\dots,m\}$, we have $\sum_{x\in C_{j}}\alpha(x)=|C_j|-\sum_{x\in C_{j}}\alpha(\bar{x})=|C_j|-\sum_{x\in C_{j}}\left|\{x\}\cap H\right|\geq|C_j|-\left|\left(N[z_{j,0}]\setminus\left\{z_{j,0},q_{j,0}^{1}, \dots, q_{j,0}^{k-|C_j|}\right\}\right)\cap H\right|\geq |C_j| - (|C_j|-1) = 1$.
We conclude that $\alpha$ is a satisfying assignment.

We now turn to \ofvc and modify the above reduction as follows.
The first difference is that there 
is now only one $Y_i^s$ gadget for each variable $x_i$, so we call it $Y_i$. The 
second difference is that $Y_i$ is different from $Y_i^s$ from the previous 
reduction, because now $Y_i$ is responsible for picking only one vertex among 
$\{ x_i, \bar{x}_i \}$ to the solution. To be more precise $Y_i$ is now a star 
with a center $y_{i,0}$ connected to $y_{i,1}, \ldots y_{i,k-1}$, but now also 
each $y_{i,j},j\in \{1 \ldots, k-1 \}$ is a center of a star, connected to 
$y^j_{i,1}, \ldots, y^j_{i,k}$. In other words, $Y_i$ is a tree of depth two 
rooted at $y_{i,0}$, where the root has $k-1$ children and each child of the 
root has $k$ children. In the constructed graph, for each variable $x_i$ both 
literals $x_i$ and $\bar{x}_i$ are now connected to $y_{i,0}$ instead of 
$y_{i,0}^s$ vertices from the previous reduction. The remaining part of the 
graph is precisely the same as in the reduction for~\cref{thm:svc-hard}.  

Assume we have an \ofvc solution for the constructed graph and the given 
parameter $k \geq 4$. Observe, that for each $i$ the vertices $y_{i,0}$ and 
$y_{i,1} \ldots y_{i,k-1}$ must be taken to the solution since their degree is 
$k+1$. Therefore, since vertex $y_{i,0}$ has $k-1$ neighbors other than $x_i$ 
and $\bar{x}_i$, only one vertex among $x_i$ and $\bar{x}_i$ can be taken to the 
solution. The remaining part of the argument is the same as in the proof 
for \fvc: we construct a satisfying assignment by setting this 
literal to false, whose corresponding vertex was taken to the solution. 
Similarly as before, each clause has at most two negated literals. The opposite 
direction is analogous. 
\end{proof}

The previous reduction actually also shows that for any $\eps > 0$, it is NP-hard
to compute a $(4/3 - \eps)$-approximation for \fvc, as for $k=3$, a $(4/3 - 
\eps)$-approximate solution actually has sparseness $3$.
Still, we are able to compute a $(2 - \frac{1}{k})$-approximation for \fvc,
which is slightly better than our result for \svc and also better then the best
possible approximation ratio for \pname{Vertex Cover} (and thus \svc) under UGC.
In particular, our algorithm implies that for the smallest value $k=3$ for which 
\fvc is NP-hard, we can obtain a solution of sparseness $5$. We leave open 
whether a solution of sparseness $4$ can be computed in polynomial time for \fvc 
if~$k=3$, and whether better approximation algorithms are possible for \ofvc.

\fvcApprox*
\begin{proof}
As for the $2$-approximation algorithm for \svc (cf.\ \cref{thm:svc-approx}), 
we consider the relaxation of \eqref{shs-ILP}.
However, in order to improve the approximation ratio, observe that in any 
solution of cost $k$ to \fvc, every vertex of degree more than $k$ must 
be contained in the solution (otherwise some edge incident to such a vertex is 
not covered). Thus we may guess the optimum sparseness $k^\star$, define the set 
of high degree vertices $D=\{v\in V\mid\deg(v)>k^\star\}$ and add the constraint 
$x_v=1$ for every $v\in D$ to the above LP relaxation. We again let $W$ be the 
set of vertices with value at least $1/2$, which is a feasible vertex cover. If 
for any closed neighbourhood $N[v]\in\mc{B}$ we have $N[v]\subseteq W$ then all 
neighbours of $v$ are contained in~$W$, and thus we may remove $v$ from $W$ and 
still obtain a vertex cover. We repeat this iteratively for each vertex until we 
obtain a vertex cover $W$ for which no closed neighbourhood is entirely 
contained in $W$. In particular, for any $v\notin D$ we have $|W\cap N[v]|\leq 
k^\star$. For $v\in D$ on the other hand, since $x_v=1$ we get
\[
 |W\cap N[v]|\leq |D\cap N[v]|+2\sum_{u\in N[v]\setminus D} x_u\leq 
(2x_v-1)+2\sum_{u\in N[v]\setminus \{v\}} x_u\leq 2k-1.
\]
This means that the set $W$ yields a $(2-\frac{1}{k})$-approximation.
\end{proof}

\section{Hardness of Highway Dimension and Shortest Path Cover}

In this section, we study the parameterized complexity of \hd and \spc, and show
the following theorem.

\hdHard*

To prove \cref{thm:hd-hard}, we present a parameterized reduction from 
\pname{Clique} to \hd. 
This reduction also shows W[1]-hardness for \spc, as the constructed graph $G$ 
has radius at most $2r$, i.e., any solution for \hd 
is also a solution for \spc of the same cost and vice 
versa.

Let $H=(V,E)$ be a graph and let $k \in \mathbb{N}$.
Denote the number of vertices and edges of $H$ by $n$ and $m$, respectively.
For convenience we treat $H$ as a bidirected graph, i.e.\ we replace every edge $\{u,v\} \in E$ with directed edges $(u,v)$ and $(v,u)$.
Let $C$ be a constant whose value will be determined later on.
We construct a graph $G$ such that \hd has a solution of value $k' = 4C k(k-1) + 
\binom{k}{2} + k + 3$ on $G$ for $r = 2^m$ if and only if $H$ contains a clique 
if size $k$.
In the following, we call the individual elements of a solution for \hd also 
\emph{hubs}.

The graph $G$ contains $k(k-1)$ gadgets: For all $1 \leq i, j \leq k$ satisfying $i \neq j$ there is a gadget $G_{i,j}$. Choosing a certain set of hubs from $G_{i,j}$ means that $G_{i,j}$ represents a pair $(w_i,w_j)$ of adjacent vertices of $H$.
The idea of the reduction is to have a pair $(w_i,w_j)$ from every $G_{i,j}$ such that
\begin{enumerate}[label=(\roman*)]
		\item \label{cond:1} if $G_{i,j}$ represents $(x,y)$, then $G_{j,i}$ represents $(y,x)$, and
		\item \label{cond:2} if $G_{i,j}$ represents $(x,y)$ and $G_{i,j'}$ represents $(x',y')$, then $x = x'$.
\end{enumerate}
If these two conditions are fulfilled, it follows that there are $k$ distinct vertices $w_1, \dots, w_k$ which are pairwise adjacent, i.e.\ $\{w_1, \dots, w_k\}$ is a clique of size $k$.

Every gadget $G_{i,j}$ contains a path $u_{i,j}^1,\dots,u_{i,j}^{m}$, a path $v_{i,j}^1,\dots,v_{i,j}^{m}$, a path $a_{i,j}^1,\dots,a_{i,j}^{m}$, and a path $b_{i,j}^1,\dots,b_{i,j}^{m}$, each consisting of $m-1$ edges of length $1$.

We identify every vertex of these paths with a pair $(x,y)$ of adjacent vertices in $H$ as follows: 
Fix any ordering $\prec$ on $V$ and denote the resulting lexicographic ordering on $V \times V$ also by $\prec$.
Define $\tau \colon E \rightarrow \{1, \dots, m\}$ as
\[
	\tau(x,y) = \left| \{ (u,v) \in E \mid (u,v) \prec (x,y) \} \right|+1,
\]
i.e.\ $(x,y)$ is the $\tau(x,y)$-th edge according to $\prec$.
This allows us for instance to associate the vertex $u^{\tau(x,y)}_{i,j}$ of $G_{i,j}$ with the edge $(x,y)$ of $H$.

The four paths are connected as follows.
For $z \in \{a,v,b\}$ we connect $u_{i,j}^m$ with $z_{i,j}^1$ and $z_{i,j}^m$ with $u_{i,j}^1$, each through a path of length $r-m+3$.
To that end we introduce vertices $u_{i,j}^{0,z}, u_{i,j}^{m+1,z}, z_{i,j}^{0,u}$ and $z_{i,j}^{m+1,u}$, and add the edges $\{u_{i,j}^1,u_{i,j}^{0,z}\}, \{u_{i,j}^m,u_{i,j}^{m+1,z}\}, \{z_{i,j}^{1},z_{i,j}^{0,u}\}, \allowbreak \{z_{i,j}^m,z_{i,j}^{m+1,u}\}$ of length $1$ and the edges $\{u_{i,j}^{m+1,z}, z_{i,j}^{0,u}\}, \{z_{i,j}^{m+1,u}, u_{i,j}^{0,z}\}$ of length $r-m+1$.

Moreover, we add vertices $a_{i,j}^{m+1,v}, v_{i,j}^{0,a}, v_{i,j}^{m+1,b}, b_{i,j}^{0,v}$ and add edges $\{a_{i,j}^m, a_{i,j}^{m+1,v}\}, \{v_{i,j}^1, v_{i,j}^{0,a}\}, \allowbreak \{v_{i,j}^m, v_{i,j}^{m+1,b}\}, \allowbreak \{b_{i,j}^1, b_{i,j}^{0,v}\}$ of length $1$ and edges $\{a_{i,j}^{m+1},v_{0,a}\}, \{v_{i,j}^{m+1,b}, b_{i,j}^{0,v}\}$ of length $r-2m+2$.
This is illustrated in \cref{fig:gadget}.

\begin{figure}[p]
\centering
\includegraphics[scale=.75,page=1]{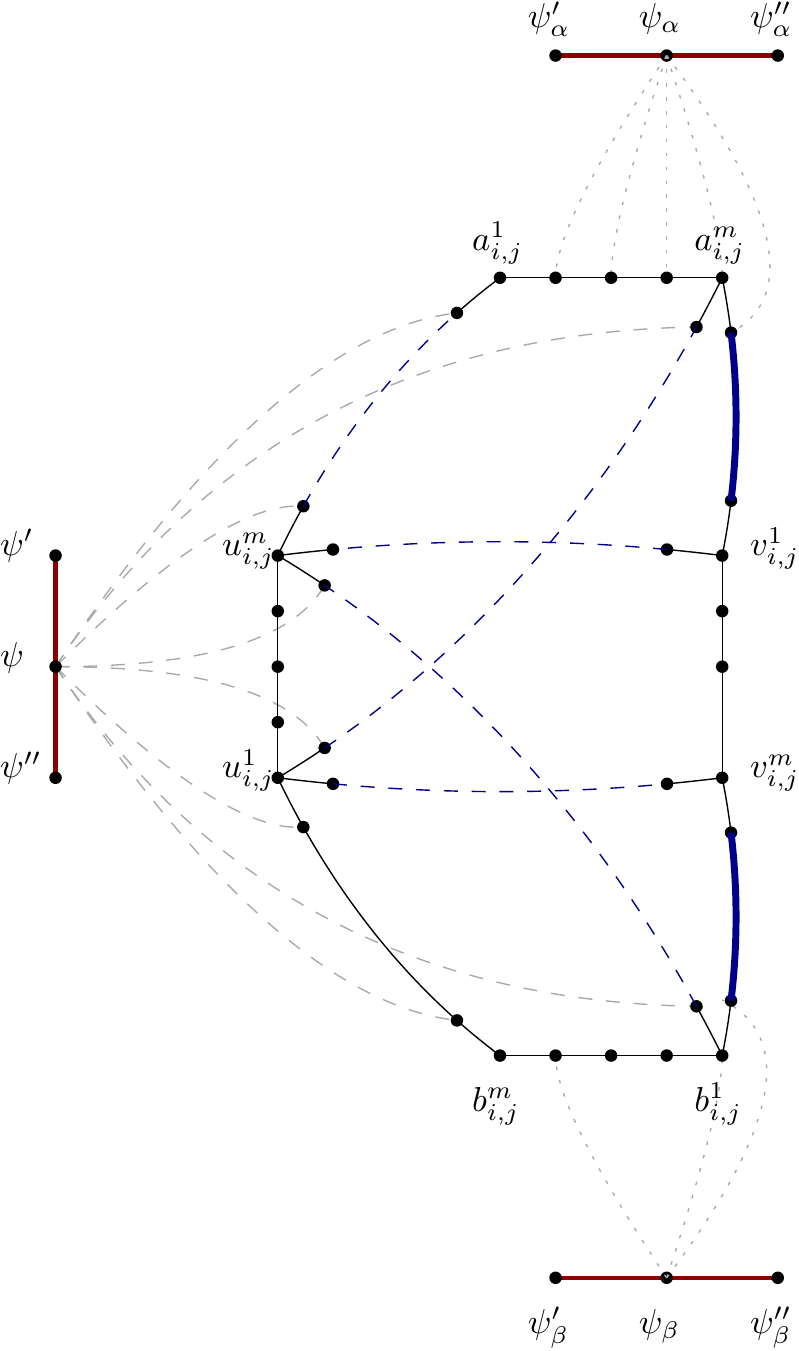}
\caption{A gadget $G_{i,j}$ and the global vertices. Solid black edges have 
length $1$, dashed blue edges have length $r-m+1$, thick blue edges have length 
$r-2m+2$, dotted gray edges have length $m-1$, dashed gray edges have length 
$r/2$ and red edges have length $r$.}\label{fig:gadget}
\vspace{1cm}
\includegraphics[scale=.5,page=2]{gadget-crop}
\caption{Two gadgets $G_{i,j}$ and $G_{j,i}$ and the connections between them. 
The marked vertices indicate that $G_{i,j}$ and $G_{j,i}$ represent the pairs 
$(x,y)$ and $(y,x)$, respectively. Moreover, the vertex $\alpha_{i,j}^{(x,y)}$ 
is marked.}\label{fig:gadgets_ij_and_ji}
\end{figure}

The idea is that the shortest path from $u_{j,j}^1$ to any of $a_{i,j}^{0,u}, 
v_{i,j}^{0,u}$, and $b_{i,j}^{0,u}$ has length $r+1$ and that we will have to 
choose some pair $(x,y)$ in order to hit these shortest paths through the hub 
$u_{i,j}^{\tau(x,y)}$.
Still, the shortest paths between $a_{i,j}^{0,u}$ and $b_{i,j}^{0,u}$ and between $a_{i,j}^{m+1,u}$ and $b_{i,j}^{m+1,u}$ both have length $2r - 2m + 4 > r$, but are not hit, if we choose, e.g., the hub $u_{i,j}^2$.
Hence, we introduce a shorter path between $a_{i,j}^{0,u}$ and $b_{i,j}^{0,u}$ and between $a_{i,j}^{m+1,u}$ and $b_{i,j}^{m+1,u}$, which will be hit by a global dummy hub: We add vertices $\psi,\psi',\psi''$ and the edges $\{\psi',\psi\}$ and $\{\psi'',\psi\}$, both of length $r$.
Moreover, we add edges between $\psi$ and $a_{i,j}^{0,u}, a_{i,j}^{m+1,u}, b_{i,j}^{0,u}$, and $b_{i,j}^{m+1,u}$, each of length $r/2$.
The shortest $a_{i,j}^{0,u}$-$b_{i,j}^{0,u}$- and $a_{i,j}^{m+1,u}$-$b_{i,j}^{m+1,u}$-paths now have length $r$ and  pass through $\psi$.
Furthermore, the shortest $\psi'$-$\psi''$-path has length $2r$ and we may assume w.l.o.g.\ that it is hit through the hub $\psi$.

We will show that if the final graph $G$ admits a solution of value $k'$, then there is a hitting set for $\mc{P}_r$ containing four vertices from every gadget $G_{i,j}$, which represent a pair $(x,y)$ of adjacent vertices of $H$.
Our construction needs to ensure that for these pairs $(x,y)$ conditions \ref{cond:1} and \ref{cond:2} are fulfilled.
First we create $C$ copies $G_{i,j}^1, \dots, G_{i,j}^C$ of the graph $G_{i,j}$. For simplicity, we confuse the graphs $G_{i,j}$ and $G_{i,j}^\lambda$ for $\lambda \in \{1, \dots, C\}$ when the context is clear.
Our final construction will yield that if $G_{i,j}^\lambda$ represents $(x,y)$ and $G_{i,j}^{\lambda'}$ represents $(x',y')$, then we have $(x,y) = (x',y')$.
Note that the vertices $\psi, \psi'$, and $\psi''$ are not part of any gadget $G_{i,j}$ and hence, we do not create copies of them.

For condition \ref{cond:1} we have to synchronise the gadgets $G_{i,j}$ and $G_{j,i}$. 
To that end, for all $1 \leq i < j \leq k$ and all $(x,y) \in E$ we add a vertex $\alpha_{i,j}^{(x,y)}$. Moreover, we add edges of weight $m$ from (all $C$ copies of) $a_{i,j}^{\tau(x,y)+1}$ and from (all $C$ copies of) $a_{j,i}^{\tau(y,x)+1}$ to $\alpha_{i,j}^{(x,y)}$.
This is illustrated in \cref{fig:gadgets_ij_and_ji}.
The idea is that all shortest paths between $G_{i,j}$ and $G_{j,i}$ contained in 
$\mc{P}_r$ can be hit with one additional hub $\alpha_{i,j}^{(x,y)}$ if both 
gadgets agree on the pairs $(x,y)$ and~$(y,x)$.

Still, the newly added edges of length $m$ add new shortest paths to 
$\mc{P}_r$. For instance, in any $G_{i,j}$, the shortest path between 
$u_{i,j}^{m+1,a}$ and $\alpha_{i,j}^{\tau^{-1}(1)}$ has length $r+2$. To ensure 
that it suffices to choose only $u_{i,j}^{\tau(x,y)}, v_{i_j}^{\tau(x,y)}$, and 
$\alpha_{i,j}^{(x,y)}$ as hubs, we remove these paths from $\mc{P}_r$ by 
creating a new shortest path between $u_{i,j}^{m+1,a}$ and 
$\alpha_{i,j}^{\tau^{-1}(1)}$, which passes through the dummy hub $\psi$. To 
that end, for all $1 \leq i , j \leq k$ satisfying $i \neq j$ we add an edge 
between $\psi$ and $u_{i,j}^{0,a}, u_{i,j}^{m+1,a}$ and all 
$\alpha_{i,j}^{(x,y)}$, each of length $r / 2$.

Similarly, we avoid ''undesired`` hubs covering shortest paths across different gadgets $G_{i,j}$ and $G_{j,i}$ by introducing new vertices $\psi_\alpha, \psi'_\alpha$ and $\psi''_\alpha$ and adding the edges $\{\psi'_\alpha, \psi_\alpha\}$ and $\{\psi''_\alpha, \psi_\alpha\}$ of length $r$ and an edge of length $m-1$ between $\psi_\alpha$ and all $a_{i,j}^{\tau(x,y)+1}$.

To fulfill condition \ref{cond:2} we have to synchronise the gadget $G_{i,j}$ with every other gadget $G_{i,j'}$. To that end, for all $1 \leq i \leq k$ and all $x \in V$ we add a vertex $\beta_i^{x}$. Let $y_0, \dots, y_d$ be the neighbors of $x$ such that $y_0 \prec \dots \prec y_d$.
For $1 \leq i,j \leq k, i \neq j$ we add an edge of weight $m+d$ between $\beta_i^{x}$ and every (copy of) $b_{i,j}^{\tau(x,y_0)-1}$. Here the idea is that if two gadgets $G_{i,j}$ and $G_{i,j'}$ represent pairs $(x,y)$ and $(x',y')$ such that $x = x'$, then choosing $\beta_i^x$ as a hub suffices to hit all relevant shortest paths between the two gadgets.

Again, we have to take care of newly created shortest paths. Therefore we add an edge of length $r/2$ between $\psi$ and $u_{i,j}^{0,b}, u_{i,j}^{m+1,b}$ and all $\beta_{i}^{x}$.
Moreover we handle shortest paths across different gadgets $G_{i,j}$ and $G_{i,j'}$ by introducing new vertices $\psi_\beta,\psi'_\beta$ and $\psi''_\beta$ and adding the edges $\{\psi'_\beta,\psi_\beta\}$ and $\{\psi''_\beta,\psi_\beta\}$ of length $r$ and an edge of length $m+d-1$ between $\psi_\beta$ and every $b_{i,j}^{\tau(x,y_d)-1}$ where $y_d$ is the maximum neighbor of $x$ according to $\prec$. This concludes the construction of the graph $G$, which is also illustrated in \cref{fig:whole_constuction}.

\begin{figure}[t]
\centering
\includegraphics[scale=.25]{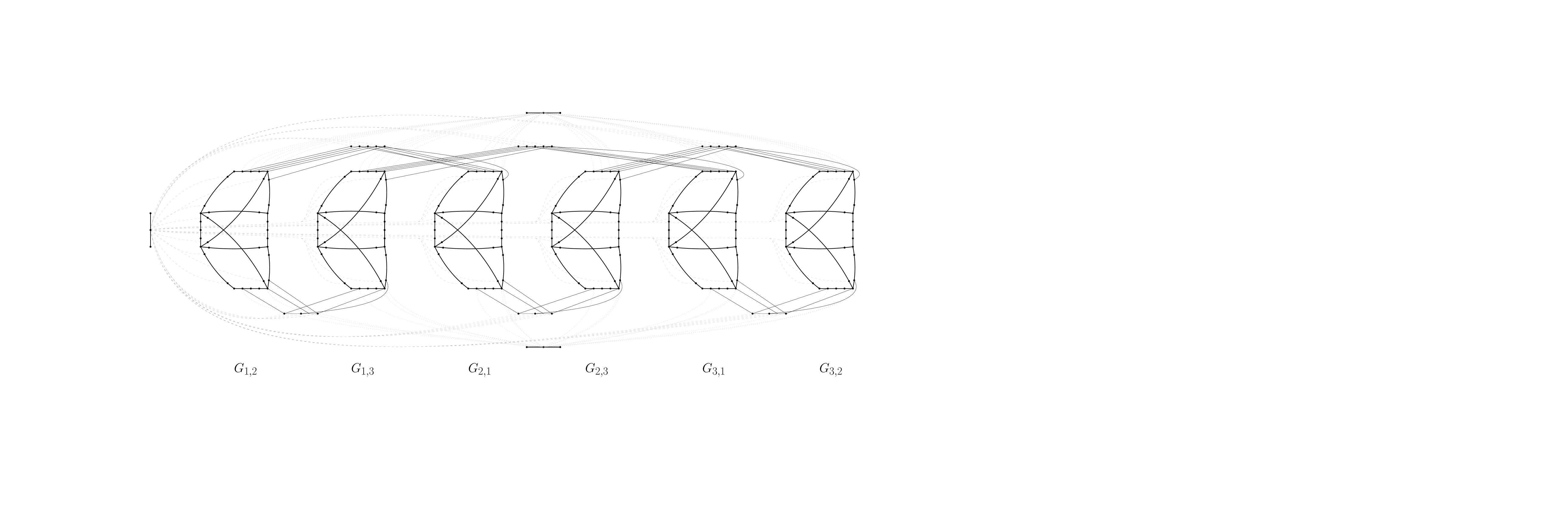}
\caption{A (simplified) illustration of the whole construction. Note that only one copy $G_{i,j}^{\lambda}$ of every $G_{i,j}$ is shown.}\label{fig:whole_constuction}
\end{figure}

We now show several properties of the graph $G$ which allow us to prove \cref{thm:hd-hard}.
The following \namecref{lem:gadget-represents-pair} states that choosing four hubs from some gadget $G_{i,j}$ means that the gadget represents a unique pair $(x,y)$.

\begin{lemma}\label{lem:gadget-represents-pair}
				Let $1 \leq i, j \leq k$ where $i \neq j$ and let $\mathcal H_{i,j}$ be a hitting set for all shortest paths from $\mc{P}_r$ that are contained in $G_{i,j}$. It holds that $\left| \mathcal H_{i,j} \right| \geq 4$ and moreover, if $\left| \mathcal H_{i,j} \right|=4$, then $\mathcal H_{i,j} = \left\{u_{i,j}^{\tau(x,y)}, a_{i,j}^{\tau(x,y)}, v_{i,j}^{\tau(x,y)}, b_{i,j}^{\tau(x,y)}\right\}$ for some pair $(x,y)$.
\end{lemma}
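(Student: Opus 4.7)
Our plan is to analyse the three cycles $u$-$a$, $u$-$v$, $u$-$b$ inside $G_{i,j}$, each of length exactly $2r+4$. For every $z \in \{a, v, b\}$, the shortest paths in $\mc{P}_r$ that are fully supported on the $u$-$z$ cycle are precisely the sub-arcs of length $r+1$ (uniquely shortest) or $r+2$ (both directions shortest) whose endpoints are vertices of $G_{i,j}$. A short enumeration produces, for each $p \in \{1,\dots,m\}$, the two sub-arcs of $(u_{i,j}^p, z_{i,j}^p)$ of length $r+2$, the clockwise sub-arc of $(u_{i,j}^{p+1}, z_{i,j}^p)$ and the counterclockwise sub-arc of $(u_{i,j}^p, z_{i,j}^{p+1})$ of length $r+1$ for $p \in \{1,\dots,m-1\}$, and a few boundary sub-arcs involving the junctions $u_{i,j}^{0,z}, u_{i,j}^{m+1,z}, z_{i,j}^{0,u}, z_{i,j}^{m+1,u}$.

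For the lower bound $|\mathcal H_{i,j}| \ge 4$, the central observation is that for each $z \in \{a,v,b\}$ the clockwise sub-arc of $(u_{i,j}^{m+1,z}, z_{i,j}^m)$ (of length $r+1$) is contained in the $z$-specific set $\{u_{i,j}^{m+1,z}, z_{i,j}^{0,u}, z_{i,j}^1, \dots, z_{i,j}^m\}$, which is disjoint from the analogous sets for the other two values of $z$. Hence $\mathcal H_{i,j}$ must include three pairwise distinct ``$z$-specific'' hubs, one per cycle. Moreover, no single $z$-specific hub suffices for its own cycle: taking only $z_{i,j}^s$ leaves the counterclockwise sub-arc of $(u_{i,j}^p, z_{i,j}^p)$ uncovered for every $p > s$, and this sub-arc meets the $u$-$z$ cycle only in $u$-path vertices and in $z$-path/junction vertices that are already accounted for. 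Since the $u$-path is the only region shared by all three cycles, one additional $u$-path hub is needed to supply the required second hub for every cycle, giving $|\mathcal H_{i,j}|\ge 4$.

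For the equality case $|\mathcal H_{i,j}|=4$, a counting argument based on the above (each cycle needs at least two hubs on it, and the $u$-path is the only region shared by all three) forces the configuration to consist of exactly one hub on each of the paths $u$, $a$, $v$, $b$. Writing these hubs as $u_{i,j}^{s_u}, a_{i,j}^{s_a}, v_{i,j}^{s_v}, b_{i,j}^{s_b}$, we first rule out junction vertices: for example, replacing $a_{i,j}^{s_a}$ by the junction $a_{i,j}^{0,u}$ leaves the clockwise sub-arc of $(a_{i,j}^1, u_{i,j}^{0,a})$ unhit, and any $u$-side junction vertex lies on only one of the three cycles and so cannot double as a $u$-path hub. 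Finally, for each $z \in \{a,v,b\}$ the pair $\{u_{i,j}^{s_u}, z_{i,j}^{s_z}\}$ must hit the clockwise sub-arc of $(u_{i,j}^{s'+1}, z_{i,j}^{s'})$ and the counterclockwise sub-arc of $(u_{i,j}^{s'}, z_{i,j}^{s'+1})$ for every $s' \in \{1,\dots,m-1\}$; a direct case analysis shows the former forces $s_z \le s_u$ and the latter forces $s_u \le s_z$, hence $s_u = s_z$. This yields $s_u = s_a = s_v = s_b =: s \in \{1,\dots,m\}$, and since $\tau$ is a bijection from $E$ to $\{1,\dots,m\}$ we set $(x,y) := \tau^{-1}(s)$ to obtain the stated form.

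The main obstacle is the careful bookkeeping of sub-arcs, junction vertices, and ``cross-cycle'' shortest paths (for instance those passing through the $a$-$v$ or $v$-$b$ connectors or relying on $\psi$-shortcuts in the wider graph). One must verify that no exotic $4$-vertex configuration mixing junctions with non-matching indices provides the same coverage, which requires exhibiting, for each failed candidate, a concrete shortest sub-arc that it misses.
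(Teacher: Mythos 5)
Your proof follows essentially the same route as the paper's: both arguments rest on the length-$(r+1)$ shortest paths that wrap around each of the three $u$--$z$ cycles through the long connector edges (the paper's $P^{uz}(\iota)$ and $P^{zu}(\iota)$), use their disjointness to force at least four hubs with exactly one on each of the four inner paths, and use the interleaving of the clockwise and counterclockwise arcs to force the four indices to coincide. One small reassurance: for this lemma you only need that the specific arcs you exhibit belong to $\mc{P}_r$ and lie inside $G_{i,j}$ --- a complete enumeration of all shortest paths in the gadget is only required for the converse direction of the reduction --- so the ``main obstacle'' you flag is lighter than you fear.
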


\begin{proof}
Let $1 \leq \iota \leq m+1$. For $z \in \{a,v,b\}$ we define the path $P^{uz}(\iota)$ as the shortest $s$-$t$-path path for
\[
s = \begin{cases} u_{i,j}^{\iota} & \text{if } \iota \leq m \\ u_{i,j}^{m+1,z} & \text{else}\end{cases}
\text{ and }
t = \begin{cases} z_{i,j}^{\iota-1} & \text{if } \iota > 1 \\ z_{i,j}^{0,u} & \text{else}\end{cases}.
\]

Similarly we define the path $P^{zu}(\iota)$ as the shortest $s$-$t$-path for
\[
s = \begin{cases} z_{i,j}^{\iota} & \text{if } \iota \leq m \\ z_{i,j}^{m+1,u} & \text{else}\end{cases}
\text{ and }
t = \begin{cases} u_{i,j}^{\iota-1} & \text{if } \iota > 1 \\ u_{i,j}^{0,z} & \text{else}\end{cases}.
\]

\begin{figure}[t]
	\centering
	\begin{subfigure}[b]{0.49\textwidth}
		\centering
		\includegraphics[scale=.7,page=3]{gadget-crop}
		\caption{The shortest paths $P^{ua}(\iota)$ and $P^{au}(\iota)$.}\label{fig:path_Pua}
	\end{subfigure}
	\hfill
	\begin{subfigure}[b]{0.49\textwidth}
		\centering
		\includegraphics[scale=.7,page=4]{gadget-crop}
		\caption{Eight shortest paths of length $r+1$.}\label{fig:eight_paths}
	\end{subfigure}
	\caption{Shortest paths in a gadget $G_{i,j}$}
\end{figure}

We observe that $P^{uz}(\iota)$ passes through the vertices $u_{i,j}^{m+1,z}$ and $z_{i,j}^{0,u}$ as illustrated in \cref{fig:path_Pua}, and that its length is $(m+1 - \iota) + (r-m+1) + (\iota-1) = r+1$. Similarly, $P^{zu}(\iota)$ passes through $z_{i,j}^{m+1,u}$ and $u_{i,j}^{0,z}$, and has also length $r+1$. This means that both $P^{uz}(\iota)$ and $P^{zu}(\iota)$ need to be hit by $\mathcal H_{i,j}$.
Consider the eight shortest paths $P^{ua}(1), P^{ua}(m+1), P^{au}(1), P^{uv}(m+1),\allowbreak P^{vu}(1),\allowbreak P^{bu}(1), P^{ub}(m+1)$, and $P^{bu}(m+1)$, which are shown in \cref{fig:eight_paths}. It holds that every vertex of $G_{i,j}$ covers at most two of these paths, which implies $\left| \mathcal H_{i,j} \right| \geq 4$.
To hit the shortest path $P^{uv}(m+1)$ we have to choose one of the vertices $u_{i,j}^{m+1,v}, v_{i,j}^{0,u}, v_{i,j}^{1}, \dots, v_{i,j}^m$. However, the vertices $u_{i,j}^{m+1,v}$ and $v_{i,j}^{0,u}$ do not hit any of the other seven shortest paths. Hence, if we have $\left| \mathcal H_{i,j} \right| = 4$, then one of the four hubs must be the vertex $v_{i,j}^{\tau(x,y)}$ for some $(x,y) \in E$.
Repeating the same argument for the paths $P^{au}(1), P^{ub}(m+1)$, and $P^{ua}(1)$, one can show that if $\left| \mathcal H_{i,j} \right| = 4$, then $\mathcal H_{i,j}$ consists of four vertices $v_{i,j}^{\tau(x,y)}, a_{i,j}^{\tau(x',y')}, v_{i,j}^{\tau(x'',y'')}, b_{i,j}^{\tau(x''',y''')}$.

We now show that for these four vertices it holds that $(x,y) = (x',y') = (x'',y'') = (x''',y''')$.
Suppose that for some $(x,y)$ we have $v_{i,j}^{\tau(x,y)} \in \mathcal H_{i,j}$.
Consider the two shortest paths $P^{uv}(\tau(x,y))$ and $P^{vu}(\tau(x,y))$. Our previous observations imply that both paths must be hit through some vertex $u_{i,j}^{\tau(x',y')}$. As $\mathcal H_{ij}$ contains precisely one vertex $u_{i,j}^{\tau(x',y')}$ and as both paths intersect only in $u_{i,j}^{\tau(x,y)}$, it follows that $u_{i,j}^{\tau(x,y)} \in \mathcal H_{i,j}$. Similarly, it follows that $\mathcal H_{i,j}$ needs to contain the vertices $a_{i,j}^{\tau(x,y)}$ and $b_{i,j}^{\tau(x,y)}$.
Hence, if $\left| \mathcal H_{i,j} \right| = 4$, it follows that there is a unique $(x,y)$ such that $\mathcal H_{i,j} = \left\{u_{i,j}^{\tau(x,y)}, a_{i,j}^{\tau(x,y)}, v_{i,j}^{\tau(x,y)}, b_{i,j}^{\tau(x,y)}\right\}$, and we say that $G_{i,j}$ represents $(x,y)$.
\end{proof}

Moreover, we show that if a gadget $G_{i,j}$ represents some pair $(x,y)$, then two certain shortest paths are not hit by the hubs of $G_{i,j}$.
To that end, for any $(x,y) \in E$ and all $1 \leq i < j \leq k$ and $\lambda \in \{1, \dots, C\}$, let $A_{i,j}^{(x,y),\lambda}$ be the shortest path between $\alpha_{i,j}^{(x,y)}$ and the $\lambda$-th copy of $v_{i,j}^{\tau(x,y)-1}$.
The length of $A_{i,j}^{(x,y),\lambda}$ is
\begin{multline*}
	\dist(\alpha_{i,j}^{(x,y)},a_{i,j}^{\tau(x,y)+1}) + 
\dist(a_{i,j}^{\tau(x,y)+1},a_{i,j}^{m+1,v}) + 
\dist(a_{i,j}^{m+1,v},v_{i,j}^{0,a}) + 
\dist(v_{i,j}^{0,a},v_{i,j}^{\tau(x,y)-1})  = \\
	m + m-\tau(x,y) + r-2m+2 + \tau(x,y)-1  = r+1.
\end{multline*}
Similarly, we define $B_{i,j}^{x,\lambda}$ as the shortest path between $\beta_i^{x}$ and the $\lambda$-th copy of $v_{i,j}^{\tau(x,y_d)+1}$, where $y_d$ is the maximum neighbor of $x$.
The path $B_{i,j}^{x,\lambda}$ consists of a $v_{i,j}^{\tau(x,y_d)+1}$-$v_{i,j}^{m+1,b}$-path of length $m-\tau(x,y_d)$, the edge $\{v_{i,j}^{m+1,b},b_{i,j}^{0,v}\}$ of length $r-2m+2$, a $b_{i,j}^{0,v}$-$b_{i,j}^{\tau(x,y_0)-1}$-path of length $\tau(x,y_d)-d-1$ and the edge $\{b_{i,j}^{\tau(x,y_0)-1}, \beta^x_i\}$ of length $m+d$.
The path $B_{i,j}^{x,\lambda}$ has length
\begin{multline*}
  \dist(\beta^x_i,b_{i,j}^{\tau(x,y_0)-1}) + 
\dist(b_{i,j}^{\tau(x,y_0)-1},b_{i,j}^{0,v}) + 
\dist(b_{i,j}^{0,v},v_{i,j}^{m+1,b}) + 
\dist(v_{i,j}^{m+1,b},v_{i,j}^{\tau(x,y_d)+1})  = \\
	m+d + \tau(x,y_d)-d-1 + r-2m+2 + m-\tau(x,y_d)  = r+1.
\end{multline*}
Moreover the following \namecref{lem:synchronizing-paths} holds, which is also illustrated in \cref{fig:lemma_paths}.

\begin{figure}[t]
\centering
\includegraphics[scale=.7,page=5]{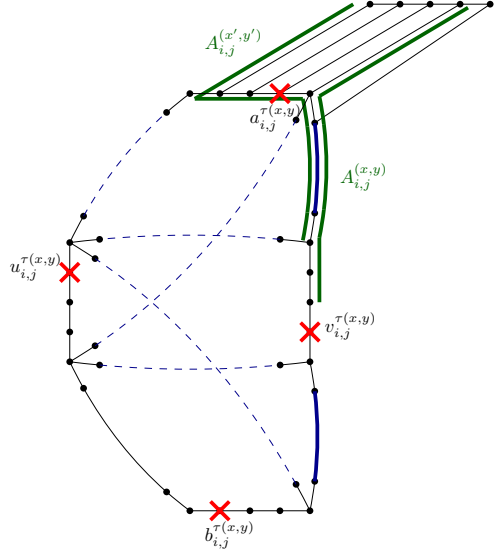}
\caption{An illustration of \cref{lem:synchronizing-paths}. The gadget $G_{i,j}$ represents $(x,y)$, which means that the shortest path $A_{i,j}^{(x,y)}$ is not hit by the hubs of  $G_{i,j}$, whereas any other shortest path $A_{i,j}^{(x',y')}$ is hit.}\label{fig:lemma_paths}
\end{figure}

\begin{lemma}\label{lem:synchronizing-paths}
If the gadget $G_{i,j}^\lambda$ represents the pair $(x,y)$, then the hubs of $G_{i,j}^\lambda$ hit
the shortest path $A_{i,j}^{(x',y'),\lambda}$ if and only if $(x,y) \neq (x',y')$, and
the shortest path $B_{i,j}^{x',\lambda}$ if and only if $x \neq x'$.
\end{lemma}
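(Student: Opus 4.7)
By \cref{lem:gadget-represents-pair}, the hubs of $G_{i,j}^\lambda$ that lie in the $\lambda$-th copy are exactly $\{u_{i,j}^{\tau(x,y)}, a_{i,j}^{\tau(x,y)}, v_{i,j}^{\tau(x,y)}, b_{i,j}^{\tau(x,y)}\}$. My plan is to enumerate, for each of the two shortest paths $A_{i,j}^{(x',y'),\lambda}$ and $B_{i,j}^{x',\lambda}$, the vertices they traverse, and then check which of these four candidate hubs can possibly coincide with a vertex of the path. Both parts will then reduce to comparing $\tau(x,y)$ with a small number of indices.

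For the first claim, the path $A_{i,j}^{(x',y'),\lambda}$ consists, in order, of $\alpha_{i,j}^{(x',y')}$, the $a$-segment $a_{i,j}^{\tau(x',y')+1},\ldots,a_{i,j}^{m}$, the connectors $a_{i,j}^{m+1,v}$ and $v_{i,j}^{0,a}$, and finally the $v$-segment $v_{i,j}^{1},\ldots,v_{i,j}^{\tau(x',y')-1}$, all in copy $\lambda$. Neither $u_{i,j}^{\tau(x,y)}$ nor $b_{i,j}^{\tau(x,y)}$ can lie on this path; the vertex $a_{i,j}^{\tau(x,y)}$ lies on it iff $\tau(x,y)>\tau(x',y')$, and $v_{i,j}^{\tau(x,y)}$ lies on it iff $\tau(x,y)<\tau(x',y')$. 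Since $\tau\colon E\to\{1,\ldots,m\}$ is a bijection, the path is hit precisely when $\tau(x,y)\neq\tau(x',y')$, i.e., when $(x,y)\neq(x',y')$.

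For the second claim, let $y'_0\prec\cdots\prec y'_{d'}$ denote the neighbors of $x'$. By the construction, $B_{i,j}^{x',\lambda}$ traverses $\beta_i^{x'}$, the $b$-segment $b_{i,j}^{\tau(x',y'_0)-1},\ldots,b_{i,j}^{1}$, the connectors $b_{i,j}^{0,v}$ and $v_{i,j}^{m+1,b}$, and the $v$-segment $v_{i,j}^{m},\ldots,v_{i,j}^{\tau(x',y'_{d'})+1}$. The same case analysis yields that the path is hit iff $\tau(x,y)<\tau(x',y'_0)$ or $\tau(x,y)>\tau(x',y'_{d'})$. The key observation is that because $\prec$ extends lexicographically to $V\times V$, the edges with a fixed first coordinate occupy a consecutive block under $\tau$: the edges $(x',\cdot)$ occupy exactly $[\tau(x',y'_0),\tau(x',y'_{d'})]$, and likewise the edges $(x,\cdot)$ occupy a block disjoint from this one whenever $x\neq x'$. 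Hence if $x=x'$, then $(x,y)$ is itself an edge from $x'$ and $\tau(x,y)$ lies in $[\tau(x',y'_0),\tau(x',y'_{d'})]$, so the path is not hit; and if $x\neq x'$, the two blocks are disjoint, forcing $\tau(x,y)$ to be strictly below $\tau(x',y'_0)$ or strictly above $\tau(x',y'_{d'})$, and the path is hit.

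The main technical obstacle is just careful bookkeeping: verifying that no hub other than the expected one sits on either path, and translating the endpoint indices $\tau(x',y'_0)-1$ and $\tau(x',y'_{d'})+1$ into a clean ``first coordinate equals $x'$'' test via the lexicographic structure of $\tau$. Once these are in place, both equivalences follow immediately from the index comparisons above.
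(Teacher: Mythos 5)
Your proposal is correct and follows essentially the same route as the paper: list the vertices traversed by $A_{i,j}^{(x',y'),\lambda}$ and $B_{i,j}^{x',\lambda}$, observe that only the $a$/$v$ (resp.\ $b$/$v$) hubs can lie on them, and reduce both equivalences to comparing $\tau(x,y)$ against $\tau(x',y')$ (resp.\ the interval $[\tau(x',y'_0),\tau(x',y'_{d'})]$). Your explicit remark that the lexicographic ordering makes the edges with first coordinate $x'$ a contiguous block under $\tau$ is a detail the paper leaves implicit, but it is exactly the intended argument.
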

\begin{proof}
If the gadget $G_{i,j}^{\lambda}$ represents the pair $(x,y)$, then the hubs of $G_{i,j}^{\lambda}$ are $u_{i,j}^{\tau(x,y)}, v_{i,j}^{\tau(x,y)}, \allowbreak a_{i,j}^{\tau(x,y)}$, and $b_{i,j}^{\tau(x,y)}$.
The shortest path $A_{i,j}^{(x',y'),\lambda}$ contains the vertices $v_{i,j}^{\tau(x',y')-1}, \dots, v_{i,j}^{1}$ and the vertices $a_{i,j}^{m}, \dots, a_{i,j}^{\tau(x',y')+1}$. This means that $A_{i,j}^{(x',y'),\lambda}$ is hit if and only if $(x,y) \neq (x',y')$.
The shortest path $B_{i,j}^{x'}$ contains the vertices $v_{i,j}^{\tau(x',y_d)+1}, \dots, v_{i,j}^{m}$ and the vertices $b_{i,j}^{1}, \dots, a_{i,j}^{\tau(x',y_0)-1}$, which means that $B_{i,j}^{x'}$ is hit if and only if $x \neq x'$.
\end{proof}

Let us now prove \cref{thm:hd-hard}.
We show that on $G$, \hd has a solution of value $k' = 4C k(k-1) + \binom{k}{2} 
+ k + 3$ for $r = 2^m$ if and only if $H$ contains a clique if size $k$.

\begin{proof}[Proof of \cref{thm:hd-hard}]
Let $r = 2^m$. Suppose that on the constructed graph $G$, there is a solution of 
value $k'$ for \hd.
We observe that every vertex has distance less than $2r$ from the vertex $\psi$. This means that $B_{2r}(\psi)$ contains the entire graph, and therefore there is a hitting set $\mathcal H$ of size $\vert \mathcal H \vert \leq k'$ for $\mc{P}_r$.

We will prove that for any $1 \leq i,j \leq k$ there is some $(x,y)$ such that the hitting set $\mathcal H$ contains four hubs $u_{i,j}^{\tau(x,y)}, v_{i,j}^{\tau(x,y)}, a_{i,j}^{\tau(x,y)}, b_{i,j}^{\tau(x,y)}$ from every (copy of the) gadget $G_{i,j}$, and that $\mathcal H$ contains one hub $\alpha_{i,j}^{x,y}$ for every $1 \leq i < j \leq k$ and one hub $\beta_i^x$ for every $1 \leq i \leq k$, such that conditions \ref{cond:1} and \ref{cond:2} are satisfied.
This implies that $H$ contains a clique of size $k$.

Fix now $i,j$ such that $1 \leq i < j \leq k$. We prove that the hitting set $\mathcal H$ contains some hub $\alpha_{i,j}^{(x,y)}$.
Let $\lambda \in \{1, \dots, C\}$ and denote the vertices of $\mathcal H$ that are contained in $G_{i,j}^{\lambda}$ by $\mathcal H_{i,j}^{\lambda}$.
For the sake of contradiction suppose that there is no $(x,y)$ such that $\alpha_{i,j}^{(x,y)} \in \mathcal H$.
\cref{lem:gadget-represents-pair} states that if $\left| \mathcal H_{i,j}^{\lambda} \right| \leq 4$, then $\mathcal H_{i,j}^{\lambda} = \left\{u_{i,j}^{\tau(x,y)}, v_{i,j}^{\tau(x,y)}, a_{i,j}^{\tau(x,y)}, b_{i,j}^{\tau(x,y)}\right\}$ for some $(x,y)$ and therefore $\mathcal H$ does not hit the path $A_{i,j}^{(x,y),\lambda}$ according to \cref{lem:synchronizing-paths}.
Hence, we obtain that for all $\lambda \in \{1, \dots, C\}$ we have $\left| \mathcal H_{i,j}^{\lambda} \right| \geq 5$.
Moreover, \cref{lem:gadget-represents-pair} states that from any gadget $G_{i',j'}$, $(i',j') \neq (i,j)$ we have to choose at least four hubs.
This means however that $\left| \mathcal H \right| \geq 4Ck(k-1) + C$. If we choose $C = k^2$ we obtain that $4Ck(k-1) + C > k'$, so it cannot be that there is no hub $\alpha_{i,j}^{(x,y)} \in \mathcal H$.
Analogously we can show that $\mathcal H$ contains some hub $\beta_i^x$ for every $1 \leq i \leq k$, if $C = k^2$.
To that end, fix $1 \leq i \leq k$ and suppose that there is no $x$ such that $\beta_i^x \in \mathcal{H}$.
Again, it follows from \cref{lem:gadget-represents-pair,lem:synchronizing-paths} that for all $\lambda \in \{1, \dots, C\}$ we have $\left| \mathcal H_{i,j}^{\lambda} \right| \geq 5$, where $H_{i,j}^{\lambda}$ denotes the vertices of $\mathcal H$ that are contained in $G_{i,j}^{\lambda}$.
As we showed previously, for $C = k^2$ this means that $\left| \mathcal{H} \right| > k'$, and it follows that there must be some $x$ such that $\beta_i^x \in \mathcal{H}$.

This means that $\mathcal H$ contains at least one hub $\alpha_{i,j}^{(x,y)}$ for every $1 \leq i < j \leq k$, at least one hub $\beta_i^x$ for every $1 \leq i \leq k$ and at least four hubs from every $G_{i,j}^{\lambda}$. None of these hubs hits the shortest paths $\psi''-\psi-\psi''$, $\psi'_\alpha-\psi_\alpha-\psi''_\alpha$, or $\psi'_\beta-\psi_\beta-\psi''_\beta$. To hit these three paths, we need three additional hubs. As $\mathcal H$ has size at most $k' = 4Ck(k-1) + \binom{k}{2} + k + 3$, it follows that $\mathcal H$ contains precisely four hubs from every $G_{i,j}^{\lambda}$, so every gadget represents indeed a unique pair $(x,y)$. Moreover, for every $1 \leq i < j \leq k$ there is a unique hub $\alpha_{i,j}^{(x,y)}$ and for every $1 \leq i \leq k$ there is a unique hub $\beta_i^x$.

It remains to show that the pairs represented by the individual gadgets fulfill properties \ref{cond:1} and \ref{cond:2}.
Consider $i,j$ such that $1 \leq i < j \leq k$ and let $\lambda, \lambda' \in \{1, \dots, C\}$.
Let $(x,y)$ and $(x',y')$ be the pairs represented by $G_{i,j}^{\lambda}$ and $G_{j,i}^{\lambda'}$, respectively.
\cref{lem:synchronizing-paths} states that the hubs contained in $G_{i,j}^{\lambda}$ and $G_{j,i}^{\lambda'}$ do not hit the shortest paths $A_{i,j}^{(x,y),\lambda}$ and $A_{j,i}^{(x',y'),\lambda'}$. This means that the two paths must be hit through the hubs $\alpha_{i,j}^{(x,y)}$ and $\alpha_{i,j}^{(x',y')}$. Moreover, both hubs must coincide as $\mathcal H$ has size $k'$, i.e.\ we have $(x,y) = (x',y')$, which implies condition \ref{cond:1}.

For condition \ref{cond:2}, let $1 \leq i \leq k$, let $1 \leq j, j' \leq k$, and let $\lambda, \lambda' \in \{1, \dots, C\}$.
Denote the pairs represented by $G_{i,j}^{\lambda}$ and $G_{i,j'}^{\lambda'}$ by $(x,y)$ and $(x',y')$, respectively.
It follows from \cref{lem:synchronizing-paths} that the shortest paths $B_{i,j}^{x,\lambda}$ and $B_{i,j'}^{x',\lambda'}$ are not hit through the hubs contained in $G_{i,j}^{\lambda}$ and $G_{i,j'}^{\lambda'}$. This means that the paths must be covered through hubs $\beta_i^x$ and $\beta_i^{x'}$, and as $\left| \mathcal H \right| = k'$, this is only possible if $x = x'$, i.e.\ condition \ref{cond:2} is satisfied.

This implies that the graph $H$ indeed contains a clique of size $k$.

For the other direction suppose that the graph $H$ contains a clique $\{w_1, \dots, w_k\}$ of size $k$.
Consider the following set $\mathcal H$:
For $1 \leq i,j \leq k, i \neq j$ it contains all $C$ copies of the vertices $u_{i,j}^{\tau(w_i,w_j)}, v_{i,j}^{\tau(w_i,w_j)}, a_{i,j}^{\tau(w_i,w_j)}, b_{i,j}^{\tau(w_i,w_j)}$, for $1 \leq i < j \leq k$ it contains $\alpha_{i,j}^{(w_i,w_j)}$, for $1 \leq i \leq k$ it contains $\beta_{i,j}^{w_i}$, and moreover it contains the three vertices $\psi, \psi_\alpha, \psi_\beta$.
It holds that $\mathcal H$ has size~$k'$.

We can observe that all shortest paths between different gadgets $G_{i,j}$ and $G_{i',j'}$ are hit by $\psi_\alpha$ or $\psi_\beta$.
We now show that all shortest paths from $\mc{P}_r$ that intersect only one gadget $G_{i,j}$ are hit by $\mathcal H$.
Let $1 \leq i, j \leq k$ such that $i \neq j$. Suppose that $i < j$, the case $i > j$ can be shown similarly.
Consider some vertex $t$ contained in $G_{i,j}$ and denote the shortest path between $\alpha_{i,j}^{(x,y)}$ and $t$ by $P$.
Suppose that $P$ is not hit by $\psi$. We can observe that the shortest path between $\alpha_{i,j}^{(x,y)}$ and $u_{i,j}^{m+1,a}$ or $b_{i,j}^{0,u}$ contains $\psi$.
As $P$ is not hit by $\psi$, it follows that
\begin{enumerate}[label=(\alph*)]
		\item \label{case:P1} $t = a_{i,j}^{\iota}$ for some $\iota$ or $t \in \{a_{i,j}^{0,u}, a_{i,j}^{m+1,u}, a_{i,j}^{m+1,v}\}$,
		\item \label{case:P2} $t = v_{i,j}^{\tau(x',y')}$ for some $(x',y')$, or
		\item \label{case:P3} $t \in \{v_{i,j}^{0,a}, v_{i,j}^{0,u},v_{i,j}^{m+1,u}, v_{i,j}^{m+1,b}\}$.
\end{enumerate}
In case \ref{case:P1} it holds that $P$ has length \[\dist(\alpha_{i,j}^{(x,y)},a_{i,j}^{\tau(x,y)+1}) + \dist(a_{i,j}^{\tau(x,y)+1},t) \leq m+m+1 = 2m+1 < r,\] so it does not need to be hit by $\mathcal{H}$.
In case \ref{case:P2}, the length of $P$ is
\begin{align*}
				\dist(\alpha_{i,j}^{(x,y)},a_{i,j}^{\tau(x,y)+1}) + \dist(a_{i,j}^{\tau(x,y)+1},v_{i,j}^{0,a}) + \dist(v_{i,j}^{0,a},v_{i,j}^{\tau(x',y')}) & = \\
				m + m - \tau(x,y) + r-2m+2 + \tau(x',y') = r+2 + \tau(x',y') - \tau(x,y). &
\end{align*}
It follows that the length of $P$ exceeds $r$ if and only if $\tau(x',y') \geq \tau(x,y) - 1$, i.e.\ the path $P$ contains $A_{i,j}^{(x,y)}$ as a subpath.
\Cref{lem:synchronizing-paths} states that this subpath is hit by the hubs within $G_{i,j}$ if $(x,y) \neq (w_i,w_j)$, otherwise it is hit by $\alpha_{i,j}^{(x,y)}$.
Finally, in case \ref{case:P3} it holds that $P$ is shorter than $r$ or that $P$ contains $A_{i,j}^{(x,y)}$ as a subpath, which is hit by $\mathcal H$, as we just observed.

Analogously, consider some vertex $t$ contained in $G_{i,j}$, denote the shortest path between $\beta_i^x$ and $t$ by $P'$ and suppose that $P'$ is not hit by $\psi$. As the shortest path between $\beta_i^x$ and $u_{i,j}^{0,b}$ or $a_{i,j}^{m+1,u}$ contains $\psi$, it follows that
\begin{enumerate}[label=(\alph*)]
		\item \label{case:P'1} $t = b_{i,j}^{\iota}$ for some $\iota$ or $t \in \{b_{i,j}^{0,u}, b_{i,j}^{0,v}, b_{i,j}^{m+1,u}\}$,
		\item \label{case:P'2} $t = v_{i,j}^{\tau(x',y')}$ for some $(x',y')$, or
		\item \label{case:P'3} $t \in \{v_{i,j}^{0,a}, v_{i,j}^{0,u},v_{i,j}^{m+1,u}, v_{i,j}^{m+1,b}\}$.
\end{enumerate}
In case \ref{case:P'1} it holds that $P'$ is shorter than $r$.
In case \ref{case:P'2}, the length of $P'$ is
\begin{align*}
	\dist(\beta_i^x,b_{i,j}^{\tau(x,y_0)-1}) + \dist(b_{i,j}^{\tau(x,y_0)-1},v_{i,j}^{m+1,b}) + \dist(v_{i,j}^{m+1,b},v_{i,j}^{\tau(x',y')}) & = \\
	m + d + \tau(x,y_0)-1 + r-2m+2 + m+1-\tau(x',y') & = \\
	r + 2 + \tau(x,y_0) + d - \tau(x',y') = r + 2 + \tau(x,y_d) - \tau(x',y'). &
\end{align*}

It holds that the length of $P'$ exceeds $r$ if and only if $\tau(x',y') \leq \tau(x,y_d) + 1$, which is the case if and only if $P'$ contains $B_{i,j}^x$ as a subpath, which is hit by the hubs within $G_{i,j}$ or by $\beta_i^{(w_i,w_j)}$.
In case \ref{case:P'3} it holds that the length of $P'$ is at most $r$ or that $P'$ contains $B_{i,j}^x$ as a subpath, which is hit by $\mathcal{H}$.

Consider now the shortest path between $u_{i,j}^{\tau(x,y)}$ and some vertex $t$ of $G_{i,j}$ and denote it by $P''$.
Define the shortest paths $P^{uz}(\iota)$ and $P^{zu}(\iota)$ as in the proof of \cref{lem:gadget-represents-pair}.
If the length of $P'$ exceeds $r$ then for some $z \in \{a,v,b\}$, the path $P''$ contains the path $P^{uz}(\tau(x,y))$ to $z_{i,j}^{\tau(x,y)-1}$ or the path $P^{zu}(\tau(x,y)+1)$ to $z_{i,j}^{\tau(x,y)+1}$ as a subpath.
Suppose that $P''$ contains $P^{uz}(\tau(x,y))$, the other case is analogous.
To show that $P^{uz}(\tau(x,y))$ (and hence also $P''$) is hit by $\mathcal H$, we distinguish two cases:
If $(w_i,w_j) \prec (x,y)$, then $P^{uz}(\tau(x,y))$ is hit through $z_{i,j}^{\tau(w_i,w_j)-1}$, otherwise it is hit through $u_{i,j}^{\tau(w_i,w_j)-1}$.
Similarly it can be shown that any shortest path between two vertices of 
$G_{i,j}$ whose length exceeds $r$ is hit by $\mathcal H$, which means that 
$\mathcal{H}$ is a solution for \hd on $G$ of value $k'$.
\end{proof}

\section{Approximating Shortest Path Covers}
In this section, we show how to approximate \spc.

\hdApprox*

We present an algorithm based on the following ideas.
It is well-known that the \pname{Set Cover} problem is 
equivalent to \pname{Hitting Set} by swapping the roles of the elements of the 
universe and the sets in the given set family. 
\citet{DBLP:conf/cocoon/KuhnRWWZ05} study the \pname{Minimum Membership Set 
Cover (MMSC)} problem, where the aim is to minimize the maximum 
\emph{membership} of any element of the given universe of the \pname{Set Cover} 
instance. Here the \emph{membership} of an element is the number of sets of the 
solution it is contained in. The \pname{MMSC} problem finds applications in 
interference minimization in cellular networks, and 
\citet{DBLP:conf/cocoon/KuhnRWWZ05} prove that it admits a polynomial-time 
$O(\log|U|)$-approximation, where $U$ is the given universe, and they show that 
this is best possible, unless P=NP. Translated to \shs, this means that for an 
instance where $\mc{F}=\mc{B}$, an $O(\log|\mc{F}|)$-approximation can be 
computed in polynomial time, and this is also best possible, unless P=NP. We 
show that \spc can be reduced to this version of \shs.

We first give a simple observation about the \shs\ problem which will be useful 
later in our proof. Let $(V, \mc{F}, \mc{B})$ be a set system and let $B, B' 
\in \mc{B}$ be two sets such that~$B \subsetneq B'$. If $B'$ contains at 
most~$k$ elements of the hitting set, then $B$ also contains at most~$k$ such 
elements. Hence we obtain the following.

\begin{observation}\label{obs:ignoreB}
Let $\mc{B}$ be a family containing two sets $B,B'$ such that $B 
\subsetneq B'$. If there exists a solution to \shs for $(V, \mc{F}, \mc{B} 
\setminus \{B\})$ of sparseness $k$, then there exists a solution to \shs for 
$(V, \mc{F}, \mc{B})$ of sparseness $k$.
\end{observation}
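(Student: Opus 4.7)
The plan is to observe that the same hitting set $H$ witnessing sparseness $k$ for the reduced instance $(V, \mc{F}, \mc{B}\setminus\{B\})$ works verbatim for $(V, \mc{F}, \mc{B})$; only the constraint for $B$ itself needs to be verified, and this follows immediately from monotonicity of set intersection with respect to $\subseteq$.

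More concretely, I would take an optimal solution $H \subseteq V$ for $(V, \mc{F}, \mc{B}\setminus\{B\})$ of sparseness $k$. Feasibility for $\mc{F}$ (i.e., hitting every $F \in \mc{F}$) is unchanged, since $\mc{F}$ is the same in both instances. For the sparseness constraints, I only need to check the one new ball: $|H \cap B''| \leq k$ holds for every $B'' \in \mc{B}\setminus\{B\}$ by assumption, and in particular for $B' \in \mc{B}\setminus\{B\}$. Since $B \subsetneq B'$, we get
\[
|H \cap B| \leq |H \cap B'| \leq k,
\]
which is exactly the constraint for $B$ in the instance $(V, \mc{F}, \mc{B})$. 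Thus $H$ certifies sparseness at most $k$ for the full instance as well.

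There is no real obstacle here; the statement is a direct consequence of $B \subseteq B'$ implying $H \cap B \subseteq H \cap B'$ for any $H$, which is why the authors phrase it as an observation rather than a lemma. I would keep the proof to two or three sentences along the lines above.
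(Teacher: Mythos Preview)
Your proposal is correct and matches the paper's own reasoning exactly: the paper simply notes in the sentence preceding the observation that if $B'$ contains at most $k$ elements of the hitting set then so does $B$, which is precisely your monotonicity argument $|H\cap B|\leq |H\cap B'|\leq k$.
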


We reduce the \spc problem to the \pname{Minimum Membership Set Cover} 
(\pname{MMSC}) problem. Formally, an instance of \pname{MMSC} consists of a 
universe $U$ and a family $\mc{S}$ of subsets of $U$, and the goal is to choose 
a set $\mc{S}' \subseteq \mc{S}$ such that every element in $U$ belongs to at 
least one set in $\mc{S}'$ and that the maximum membership of any element $u$ 
with respect to $\mc{S}'$ is minimal, where the membership of $u$ is defined as 
the number of sets in $\mc{S}'$ containing $u$. 

Recall that, given a weighted graph $G=(V,E)$ and a radius $r >0$, the \spc 
problem for $G$ is equivalent to the \shs problem on universe $V$ with $\mc{F} = 
\mc{P}_r$ and $\mc{B}=\{B_{2r}(v)\mid v\in V\}$. Based on \cref{obs:ignoreB}, we 
first show that if there exists a ball $B \in \mc{B}$ which does not contain any 
shortest path in $\mc{P}_r$ completely,  we can safely remove it without 
affecting the solution.

\begin{lemma}\label{lem:ignoreB}
Let $B \in \mc{B}$ which does not contain any shortest path in $\mc{P}_r$ 
completely, i.e., $S \nsubseteq B$ for every $S \in \mc{P}_r$. If there 
exists a solution to \spc for $(V,\mc{P}_r,\mc{B}\setminus \{B\})$ of 
sparseness~$k$, then there exists a solution to \spc for $(V, \mc{P}_r, 
\mc{B})$ of sparseness $k$.
\end{lemma}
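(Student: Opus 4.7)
The plan is to combine a structural analysis of the ball $B$ with a two-case argument that distinguishes whether $B$ is strictly contained in some other ball of $\mc{B}$.

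Write $B = B_{2r}(v)$. First I would argue that $B \subseteq B_r(v)$. Indeed, if some $u \in B$ satisfied $\dist(u,v) > r$, then the shortest $u$-$v$-path would have length in $(r,2r]$, and every intermediate vertex $w$ would satisfy $\dist(w,v) \leq \dist(u,v) \leq 2r$. Hence the vertex set of this path would be contained in $B$ and belong to $\mc{P}_r$, contradicting the hypothesis on $B$. A triangle inequality then yields $B \subseteq B_{2r}(u)$ for every $u \in B$, since any $x \in B$ has $\dist(x,u) \leq \dist(x,v) + \dist(v,u) \leq r + r = 2r$.

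Next I would split into two cases. In the first case, some $u \in B$ satisfies $B_{2r}(u) \supsetneq B$; since $B_{2r}(u) \in \mc{B}$, \cref{obs:ignoreB} immediately transfers any solution of sparseness $k$ for $(V, \mc{P}_r, \mc{B} \setminus \{B\})$ to one for $(V, \mc{P}_r, \mc{B})$. In the second case, $B_{2r}(u) = B$ for every $u \in B$, which means that no vertex outside $B$ lies within distance $2r$ of any vertex of $B$. Consequently, every shortest path of length at most $2r$ is either entirely contained in $B$ or entirely contained in $V \setminus B$. Since by assumption no element of $\mc{P}_r$ lies in $B$, every path in $\mc{P}_r$ avoids $B$. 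Given any solution $H$ of sparseness $k$ for $(V, \mc{P}_r, \mc{B} \setminus \{B\})$, I would therefore take $H' = H \setminus B$: this still hits every path in $\mc{P}_r$, and it satisfies $|H' \cap B| = 0$ while $|H' \cap B''| \leq |H \cap B''| \leq k$ for every other $B'' \in \mc{B}$.

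The main obstacle is the second case, where no strictly larger ball is available and \cref{obs:ignoreB} does not apply, forcing a direct modification of the hitting set. The key insight that makes this work is that the hypothesis $B_{2r}(u) = B$ for all $u \in B$ forces $B$ to be metrically isolated at scale $2r$, which is exactly the property that lets us trim $H$ down to $V \setminus B$ without losing the hitting property.
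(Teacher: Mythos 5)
Your proof is correct and follows essentially the same route as the paper: both establish $\dist(u,v)\leq r$ for all $u\in B$ (else a path of $\mc{P}_r$ would lie inside $B$), use the triangle inequality to get $B\subseteq B_{2r}(u)$ for $u\in B$, and then either invoke \cref{obs:ignoreB} with a strictly larger ball or observe that $B$ can be ignored because no path of $\mc{P}_r$ meets it. The only cosmetic difference is that you split on whether some $B_{2r}(u)$ strictly contains $B$, while the paper splits on whether some path of $\mc{P}_r$ intersects $B$ (using that intersection point $w$ as the center of the strictly larger ball); the underlying mechanism is identical.
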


\begin{proof}
First, if $S \cap B=\emptyset$ for every $S \in \mc{P}_r$, then there exists a 
solution for $(V,\mc{P}_r,\mc{B})$ not containing any vertices of $B$, and the 
claim follows. Now assume there is some path $S_B \in \mc{P}_r$ intersecting $B$ 
in some vertex $w$. We show that there exists a ball $B' \in \mc{B}$ such 
that~$B \subsetneq B'$, and thus the lemma follows from \cref{obs:ignoreB}.

Let $v$ be the center of the ball $B=B_{2r}(v)$. As $S \nsubseteq B$ for 
every $S \in \mc{P}_r$, $\dist(u,v) \leq r$ for every $u \in B$, as otherwise 
the shortest $u$-$v$-path would be contained in the ball $B$ of radius~$2r$ 
with a length in~$(r,2r]$, which would then be in $\mc{P}_r$.
Hence for any two vertices $u,u'\in B$, $\dist(u,u') \leq \dist(u,v) + 
\dist(v,u') \leq 2r$, so we have $B \subseteq B_{2r}(w)$.
Moreover, it holds that $S_B \subseteq B_{2r}(w)$ as $S_B$ is the vertex set of 
a path containing $w$ of length in $(r,2r]$, and it holds that $S_B \not 
\subseteq B$, which implies $B \subsetneq B_{2r}(w)$. By definition of $\mc{B}$ 
we have $B_{2r}(w) \in \mc{B}$, and thus by \cref{obs:ignoreB} the lemma 
follows.
\end{proof}

\cref{lem:ignoreB} means that we may assume w.l.o.g.\ that for any $B \in 
\mc{B}$ there is some $S_B \in \mc{P}_r$ such that $S_B \subseteq B$. We now 
give the following observations about the relationship among $\mc{P}_r$, 
$\mc{B}$, and a hitting set $H$ of $\mc{P}_r$. 

\begin{observation}\label{obs:ballOfSet}
Let $S \in \mc{P}_r$. As $S$ is the set of vertices of a shortest path $\pi$ of 
length $\ell(\pi) \in (r,2r]$, there exists a ball $B_S \in \mc{B}$ of radius 
$2r$, which completely contains $S$. This, in turn, implies that ${H} \cap B_S 
\neq \emptyset$  and $|H \cap S| \leq |H \cap B_S|$.
\end{observation}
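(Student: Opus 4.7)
The plan is to produce, for every $S \in \mc{P}_r$, an explicit ball in $\mc{B}$ containing $S$; the other two assertions then follow immediately from set-theoretic monotonicity and the defining property of a hitting set.

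First I would fix $S \in \mc{P}_r$, which by definition is the vertex set of a shortest path $\pi$ of length $\ell(\pi) \in (r, 2r]$. Let $u$ be one of the two endpoints of $\pi$ (either works). For every other vertex $w \in S$, the sub-path of $\pi$ from $u$ to $w$ is itself a shortest $u$-$w$-path (since any subpath of a shortest path is shortest), and its length is bounded by $\ell(\pi) \leq 2r$. Hence $\dist(u,w) \leq 2r$, so $w \in B_{2r}(u)$. This gives $S \subseteq B_{2r}(u)$. Since $u \in V$, the ball $B_S := B_{2r}(u)$ belongs to $\mc{B} = \{B_{2r}(v) \mid v \in V\}$, which proves the first claim.

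For the remaining two claims, I would simply use containment. Because $H$ is a hitting set for $\mc{P}_r$ and $S \in \mc{P}_r$, we have $H \cap S \neq \emptyset$; combined with $S \subseteq B_S$, this yields $\emptyset \neq H \cap S \subseteq H \cap B_S$, so in particular $H \cap B_S \neq \emptyset$ and $|H \cap S| \leq |H \cap B_S|$.

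There is no real obstacle here: the only subtlety is the choice of centre for the ball, and picking an endpoint of the shortest path $\pi$ (rather than, say, its midpoint or an arbitrary interior vertex) gives the cleanest argument because then the maximal distance from the centre to any vertex of $S$ is exactly $\ell(\pi) \leq 2r$, matching the ball radius. The rest is a one-line monotonicity argument.
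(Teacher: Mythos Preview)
Your proof is correct and is precisely the natural justification the paper has in mind; the paper states this as an observation without proof, and your argument (centre the ball at an endpoint of $\pi$, then use monotonicity and the hitting-set property) is the intended one-line reasoning.
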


\begin{observation}\label{obs:setOfBall}
Let $B \in \mc{B}$. If $B$ contains some shortest path set $S_B \in \mc{P}_r$,
then we have $H \cap B \neq \emptyset$ and $|H \cap S_B| \leq |H \cap B|$.
\end{observation}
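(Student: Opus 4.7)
The plan is to prove both claims by a direct set-containment argument, using only that $H$ is a hitting set for $\mc{P}_r$ and that $S_B \subseteq B$ by hypothesis.

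First I would observe that since $H$ hits every element of $\mc{P}_r$ and $S_B \in \mc{P}_r$, we have $H \cap S_B \neq \emptyset$. Next, intersecting both sides of $S_B \subseteq B$ with $H$ yields $H \cap S_B \subseteq H \cap B$. Combining these two facts, $H \cap B \supseteq H \cap S_B \neq \emptyset$, which gives the first conclusion $H \cap B \neq \emptyset$. The second conclusion $|H \cap S_B| \leq |H \cap B|$ is then immediate by monotonicity of cardinality under the containment $H \cap S_B \subseteq H \cap B$.

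There is no real obstacle here; the statement is a purely set-theoretic monotonicity fact and the proof is one line once the containment $S_B \subseteq B$ (given in the hypothesis) is combined with the hitting-set property of $H$ applied to $S_B$. The only subtlety worth noting is that the argument is perfectly symmetric to that of \cref{obs:ballOfSet}, with the roles reversed: there the ball $B_S$ was produced for a given $S \in \mc{P}_r$, whereas here the path $S_B$ is supplied by the hypothesis on $B$.
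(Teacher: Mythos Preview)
Your proposal is correct and is precisely the intended argument; the paper in fact states this as an observation without proof, since it is exactly the one-line set-containment fact you spell out. There is nothing to add.
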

By \cref{obs:ballOfSet,obs:setOfBall}, we get the following.
\begin{lemma}
There exists a solution to \shs\ for $(V,\mc{P}_r,\mc{B})$ of sparseness~$k$ 
 if and only if there exists a solution to \shs\ for $(V, \mc{P}_r \cup 
\mc{B} ,\mc{B} \cup \mc{P}_r)$ of sparseness~$k$.
\end{lemma}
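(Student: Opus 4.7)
The plan is to invoke the two preceding observations together with the WLOG assumption established by \cref{lem:ignoreB}. Writing $\mc{F}'=\mc{P}_r\cup\mc{B}$ and $\mc{B}'=\mc{B}\cup\mc{P}_r$, note that the backward direction is immediate: since $\mc{P}_r\subseteq\mc{F}'$, any hitting set $H$ for $\mc{F}'$ also hits $\mc{P}_r$, and since $\mc{B}\subseteq\mc{B}'$, the sparseness bound $|H\cap B|\le k$ for every $B\in\mc{B}'$ specialises to the same bound for every $B\in\mc{B}$. So all of the work is in the forward direction.

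For the forward direction, I would take a solution $H$ of sparseness $k$ to \shs on $(V,\mc{P}_r,\mc{B})$ and verify the two requirements for the larger system. First, to show $H$ hits every set of $\mc{F}'$, it remains to check that $H$ hits every $B\in\mc{B}$; this follows from the WLOG assumption secured by \cref{lem:ignoreB}, which provides some $S_B\in\mc{P}_r$ with $S_B\subseteq B$, together with the fact that $H$ hits $S_B$ (equivalently, \cref{obs:setOfBall} gives $H\cap B\neq\emptyset$). Second, to bound $|H\cap X|\le k$ for every $X\in\mc{B}'$, the sets in $\mc{B}$ are already fine by hypothesis, so only the sets $S\in\mc{P}_r$ remain. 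Here I would invoke \cref{obs:ballOfSet}: there is some $B_S\in\mc{B}$ with $S\subseteq B_S$, whence
\[
|H\cap S|\;\le\;|H\cap B_S|\;\le\;k.
\]

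No real obstacle is anticipated; the two observations were precisely set up so that the containments they produce translate the constraints on the original pair $(\mc{P}_r,\mc{B})$ into constraints on the symmetric pair $(\mc{P}_r\cup\mc{B},\mc{B}\cup\mc{P}_r)$. The only subtle point to flag in the write-up is that the argument that $H$ hits every ball $B\in\mc{B}$ relies essentially on the reduction from \cref{lem:ignoreB} — without it one could have balls $B\in\mc{B}$ containing no path of $\mc{P}_r$ at all, and a hitting set for $\mc{P}_r$ would not automatically hit such a $B$. After noting this, the lemma will serve its intended purpose: reducing \spc to an \shs instance with $\mc{F}=\mc{B}$, so that the $O(\log|\mc{F}|)$-approximation of \citet{DBLP:conf/cocoon/KuhnRWWZ05} (with $|\mc{F}|=|\mc{P}_r\cup\mc{B}|\le n^{O(1)}$) yields the claimed $O(\log n)$-approximation for \spc in \cref{thm:hd-approx}.
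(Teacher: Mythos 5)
Your proof is correct and follows the same route as the paper's: the backward direction is immediate from the inclusions, and the forward direction combines \cref{obs:ballOfSet} (to bound $|H\cap S|$ for $S\in\mc{P}_r$) with \cref{obs:setOfBall} plus the reduction from \cref{lem:ignoreB} (to guarantee $H$ hits every $B\in\mc{B}$). Your explicit flagging of the dependence on \cref{lem:ignoreB} matches the paper's own caveat.
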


\begin{proof}
Observe that any solution to \shs\ for $(V, \mc{P}_r \cup \mc{B} ,\mc{B} 
\cup \mc{P}_r)$ is also a solution to \shs\ for $(V,\mc{P}_r,\mc{B})$, as 
$\mc{P}_r \subseteq \mc{P}_r \cup \mc{B}$ and $\mc{B} \subseteq \mc{B} \cup 
\mc{P}_r$. We now prove that any solution $H$ to \shs\ for 
$(V,\mc{P}_r,\mc{B})$ of sparseness $k$ is also a solution to  \shs\ for 
$(V, \mc{P}_r \cup \mc{B} ,\mc{B} \cup \mc{P}_r)$ of sparseness~$k$. For this, 
we need to show that for every $S \in \mc{P}_r$, $|H \cap S| \leq 
k$ and for every $B \in \mc{B}$, $H \cap B \neq \emptyset$. The former statement
follows from \cref{obs:ballOfSet}, while the latter follows from 
\cref{obs:setOfBall} where we assume that $B$ contains some $S_B\in\mc{P}_r$ 
due to \cref{lem:ignoreB}.
\end{proof}

We now define an instance of the \pname{Minimum Membership Set Cover} with $U = 
\mc{P}_r \cup \mc{B}$ and $\mc{S} = \{S_u \mid u \in V\}$, where $S_u = \{S \in 
U \mid u \in S\}$, and prove the following.

\begin{lemma}
There exists a solution to \shs for $(V, \mc{P}_r \cup \mc{B} ,\mc{B} \cup 
\mc{P}_r)$ of sparseness~$k$ if and only if there exists a solution to 
\pname{MMSC} for $(U, \mc{S})$ of value~$k$.
\end{lemma}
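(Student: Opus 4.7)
The plan is to observe that the natural bijection between vertex subsets $H \subseteq V$ and subfamilies $\mc{S}' \subseteq \mc{S}$, given by $\mc{S}' = \{S_u \mid u \in H\}$, translates the feasibility conditions and the objective value of one problem exactly into those of the other. Hence the equivalence reduces to unpacking definitions.

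First I would record the key identity: for any element $S \in U$ and any vertex $u \in V$, the definition $S_u = \{S \in U \mid u \in S\}$ yields $S \in S_u \iff u \in S$. Consequently, for a chosen subfamily $\mc{S}' = \{S_u \mid u \in H\}$, the membership of $S$ in $\mc{S}'$ equals $|\{u \in H \mid S \in S_u\}| = |\{u \in H \mid u \in S\}| = |H \cap S|$. Taking the maximum over $S \in U = \mc{P}_r \cup \mc{B}$ therefore gives exactly $\max_{B \in \mc{B} \cup \mc{P}_r} |H \cap B|$, which is the sparseness of $H$ with respect to the family $\mc{B} \cup \mc{P}_r$.

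Next I would match the feasibility conditions. The set family $\mc{S}'$ covers $U$ if and only if for every $S \in \mc{P}_r \cup \mc{B}$ there is some $u \in H$ with $S \in S_u$, which by the identity above is the same as saying that $H \cap S \neq \emptyset$ for every $S \in \mc{P}_r \cup \mc{B}$, i.e., that $H$ is a hitting set for $\mc{F} = \mc{P}_r \cup \mc{B}$.

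Combining both directions: given a \shs-solution $H$ of sparseness $k$, the subfamily $\{S_u \mid u \in H\}$ is a feasible MMSC-solution with maximum membership $k$; conversely, given an MMSC-solution $\mc{S}'$ of value $k$, the set $H = \{u \in V \mid S_u \in \mc{S}'\}$ is a feasible \shs-solution of sparseness $k$. There is no real obstacle here; the statement is essentially a syntactic reformulation, and the entire argument fits in a short paragraph verifying that the hitting condition corresponds to the covering condition and that the $|H \cap B|$ terms correspond to memberships.
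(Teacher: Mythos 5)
Your proposal is correct and follows essentially the same route as the paper: both exploit the transposition identity $S \in S_u \iff u \in S$ to translate the hitting/sparseness conditions on $H$ into the covering/membership conditions on $\{S_u \mid u \in H\}$ (the paper spells out one direction and declares the other symmetric, while you phrase it as a single bijection). The only hair worth noting is that in the converse direction one should pick a single representative vertex per chosen set $S_u$ in case $S_u = S_{u'}$ for distinct $u, u'$, but this is a standard indexing convention that the paper glosses over in exactly the same way.
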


\begin{proof}
We will prove that if there exists a solution to \shs for $(V, \mc{P}_r \cup 
\mc{B} ,\mc{B} \cup \mc{P}_r)$ of sparseness $k$, then there exists a solution 
to \pname{MMSC} for $(U, \mc{S})$ of value $k$. The proof for the other 
direction is symmetric. Let $H$ be a solution to \shs for $(V, \mc{P}_r \cup 
\mc{B} ,\mc{B} \cup \mc{P}_r)$ of sparseness $k$. We claim that the set $W 
= \{S_u \in \mc{S}~|~u \in H\}$ is a solution to \pname{MMSC} for~$(U, \mc{S})$ 
of value $k$. Let $E \in U$. Then, $H \cap E \neq \emptyset$. Let $u \in H \cap 
E$. By the definition of $S_u$, this implies that $E \in S_u$. Moreover, for any 
$B \in \mc{P}_r \cup \mc{B}$, we have that $|H \cap B| \leq k$. This implies 
that $B$ belongs to at most $k$ sets in $W$. Hence,  $W$ is a solution to 
\pname{MMSC} for~$(U, \mc{S})$ of value~$k$. 
\end{proof}

Since there exists a $O(\log |U|)$-approximation algorithm for \pname{MMSC} by 
Kuhn et al.~\cite{DBLP:conf/cocoon/KuhnRWWZ05} and $|U| = |\mc{P}_r \cup \mc{B}| 
= O(n^2)$, by the above lemma we get an $O(\log n)$-approximation algorithm 
for~\spc. This concludes the proof of \cref{thm:hd-approx}.

\section{Dense Matching}
Finally, we consider the \pname{Dense Matching} problem and prove the following theorem.

\matchingHard*

\begin{proof}
Consider the following reduction from \textsc{$3$-Sat}.
Let an instance of this problem be given by a set of variables $X=\{x_{i}\}_{i=1,\dots,n}$
and a set of clauses $\mathcal{C}=\{C_{j}\}_{j=1,\dots,m}$ with $C_{j}\subset 
X\cup\bar{X},|C_{j}| \leq 3$. Let $\bar{\bar{x}} = x$.
We construct the graph $G = (V,E)$ given by
\[
	V = \bigcup_{i=1}^n \left( \{ x_i, \bar{x}_i, x^0 \} \cup \{ x_i^{\ell}, \bar{x}_i^{\ell} \mid 1 \leq \ell \leq 7\} \right) \cup \bigcup_{j=1}^m \left( \{ z_j \} \cup \bigcup_{x \in C_j} \{ x^{j,\ell} \mid 1 \leq \ell \leq 4 \} \right)
\]
and
\begin{align*}
	E = & \bigcup_{i=1}^n \left( \{ \{x_i, x_i^0\}, \{x_i,x_i^1\}, \{x_i^1,x_i^0\},\{\bar{x}_i, x_i^0\}, \{\bar{x}_i,\bar{x}_i^1\}, \{\bar{x}_i^1,x_i^0\} \} \right) \cup \\
				 & \bigcup_{i=1}^n \left( \bigcup_{\ell=1}^6 \{ \{x_i^{\ell}, x_i^{\ell+1}\}, \{\bar{x}_i^{\ell}, \bar{x}_i^{\ell+1}\} \} \cup \{\{x_i^7, x_i^4\}, \{\bar{x}_i^7, \bar{x}_i^4\} \} \right) \cup \\
				 & \bigcup_{j=1}^m \bigcup_{x \in C_j} \{ \{z_j,x^{j,1}\}, \{x^{j,1},x^{j,2}\}, \{x^{j,2},x^{j,3}\}, \{x^{j,3},x^{j,4}\}, \{x^{j,4},x\}, \{x^{j,4},x^0\} \}
\end{align*}
The construction is illustrated in \cref{fig:matching}.

\begin{figure}[t]
\centering
\includegraphics[width=\textwidth]{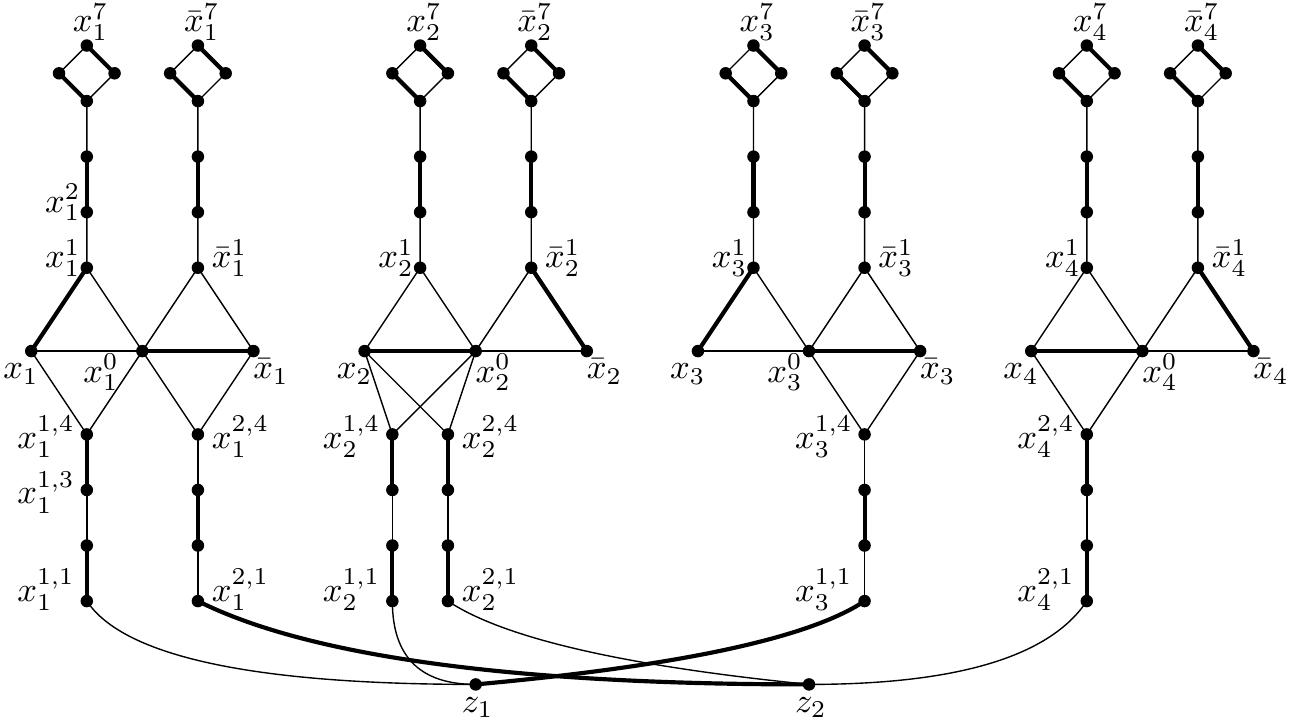}
\caption{The graph $G$ for the formula $(x_1 \vee x_2 \vee \bar{x}_3) \wedge (\bar{x}_1 \vee x_2 \vee x_4)$. The bold edges yield a matching that corresponds 
to the assignment $x_1 \mapsto 0, x_2 \mapsto 1, x_3 \mapsto 0, x_4 \mapsto 1$.}\label{fig:matching}
\end{figure}

We now show that the given \textsc{$3$-Sat} formula is satisfiable if and only 
if there is a matching~$M$ such that $|M \cap E(B_2(v))| \geq 2$ for every ball 
$B_2(v)$ of radius $2$, where we assume that edges have unit length.
This means that if the given formula is not satisfiable, then there is a ball $B_2(v)$ such that $|M \cap E(B_2(v))| \leq 1$, which implies that it is NP-hard to obtain an approximation factor less than two.

Suppose that the given formula has a satisfying assignment $\alpha \colon X \rightarrow \{0,1\}$ and extend $\alpha$ to $\bar X$ by choosing $\alpha(\bar{x}) = 1 - \alpha(x)$.
For $j=1 \dots m$ let $y_j \in C_j$ be some literal satisfying $C_j$, i.e. $\alpha(y_j) = 1$.
We construct the matching
\begin{align*}
    M = & \bigcup_{i=1}^n \{ \{x_i^2,x_i^3\}, \{x_i^4,x_i^5\}, \{x_i^6,x_i^7\}, \{\bar{x}_i^2,\bar{x}_i^3\}, \{\bar{x}_i^4,\bar{x}_i^5\}, \{\bar{x}_i^6,\bar{x}_i^7\}, \} \cup \bigcup_{x \colon \alpha(x)=1} \{ \{x,x^0\} \} \\
		\cup & \bigcup_{x \colon \alpha(x) = 0} \{ \{x,x^1\} \} \cup \bigcup_{j=1}^m \left(\{ \{z_j,y_j^{j,1}\}, \{y_j^{j,2},y_j^{j,3} \}\} \cup \bigcup_{x \in C_j \setminus \{y_j\}} \{ \{x^{j,1},x^{j,2}\}, \{x^{j,3},x^{j,4}\} \}\right)
\end{align*}
It is easy to verify that $M$ is indeed a matching and that $|M \cap E(B_2(v))| 
\geq 2$ for every ball~$B_2(v)$.

Suppose now that there is some matching $M$ such that $|M \cap E(B_2(v))| \geq 2$ for every ball $B_2(v)$.
Consider the assignment $\alpha \colon X \rightarrow \{0,1\}, \alpha(x) = 1$ if and only if there is some $j$ such that $\{z_j, x^{j,1}\} \in M$.
To show that $\alpha$ is a satisfying assignment, consider some clause $C_j$ and let $C_j = \{x_{i_1}, x_{i_2}, x_{i_3} \}$.
Consider the ball $B_2(x_{i_1}^{j,1}) = \{ z_j, x_{i_1}^{j,1}, x_{i_1}^{j,2}, x_{i_1}^{j,3}, x_{i_2}^{j,1}, x_{i_3}^{j,1} \}$.
We show that $M$ contains one of the edges $\{ z_j, x_{i_1}^{j,1}\}, \{ z_j, x_{i_2}^{j,1}\},$ and $\{ z_j, x_{i_3}^{j,1}\}$.
To prove this, suppose that $M$ contains none of these three edges.
This means that $M$ has to contain the two remaining edges $\{x_{i_1}^{j,1}, x_{i_1}^{j,2}\}$ and $\{x_{i_1}^{j,2}, x_{i_1}^{j,3}\}$ contained in $E(B_2(x_{i_1}^{j,1}))$, which is not possible as both edges are incident to $x_{i_1}^{j,2}$.

Let now $\{ z_j, x_{\iota}^{j,1}\}$ be the edge contained in $M$.
If $x_{\iota} \in X$, i.e. $x_{\iota}$ is a positive literal, it immediately follows that $\alpha$ satisfies the clause $C_j$.
Suppose now that $x_{\iota} \in \bar{X}$. We show that in this case we have $\alpha(x_{\iota}) = 0$, i.e.\ there is no $j'$ such that $\{z_{j'}, \bar{x}_{\iota}^{j',1}\} \in M$.
For the sake of contradiction, suppose that $\{z_{j'}, \bar{x}_{\iota}^{j',1}\} \in M$ for some $j' \in \{1,\dots,m\}$.
It follows from $|M \cap E(B_2(x_{\iota}^{j,1}))| \geq 2$ that $\{ x_{\iota}^{j,2}, x_{\iota}^{j,3} \} \in M$.
Consider the ball $B_2(x_{\iota}^{j,3}) = \{ x_{\iota}^{j,1}, x_{\iota}^{j,2}, x_{\iota}^{j,3}, x_{\iota}^{j,4}, x_{\iota}, \bar{x}_{\iota}^0 \}$.
As $M$ contains two edges from this ball and one of these edges is $\{ x_{\iota}^{j,2}, x_{\iota}^{j,3} \}$, the other edge needs to be contained in the triangle $\{x_{\iota}^{j,4}, x_{\iota}, \bar{x}_{\iota}^0 \}$.
Consider now the ball $B_2(x_{\iota}^7) = \{x_{\iota}^4, x_{\iota}^5, x_{\iota}^6, x_{\iota}^7\}$.
It holds that $\{x_{\iota}^4, x_{\iota}^5\} \in M$ or $\{x_{\iota}^4, x_{\iota}^6\} \in M$, which implies $\{x_{\iota}^3, x_{\iota}^4\} \not \in M$.
If we now consider the ball $B_2(x_{\iota}^2) = \{x, \bar{x}_{\iota}^0, x_{\iota}^1, x_{\iota}^2, x_{\iota}^3, x_{\iota}^4\}$, it follows that $M$ needs to contain one edge from the triangle $\{x, \bar{x}_{\iota}^0, x_{\iota}^1\}$.
As $M$ also contains one edge from the triangle $\{x_{\iota}^{j,4}, x_{\iota}, \bar{x}_{\iota}^0 \}$, we obtain that $M$ contains $\{\bar{x}_{\iota}^0, x_{\iota}^4\}, \{\bar{x}_{\iota}^0, x\}$, or $\{\bar{x}_{\iota}^0, x_{\iota}^1\}$.

However, as we also have $\{z_{j'}, \bar{x}_{\iota}^{j',1}\} \in M$, it follows analogously that $M$ contains $\{\bar{x}_{\iota}^0, \bar{x}_{\iota}^4\}, \{\bar{x}_{\iota}^0, \bar{x}\}$, or $\{\bar{x}_{\iota}^0, \bar{x}_{\iota}^1\}$, which is not possible.
This means that $\alpha$ is indeed a satisfying assignment, which concludes the proof.
\end{proof}

\printbibliography

\end{document}